\newenvironment{keywords}%
   {\begin{trivlist}\item[]{ \textit{Keywords}:}\ }% oder „Keywords:”
   {\end{trivlist}}
\DeclareMathOperator{\Fact}{Fact} 
\DeclareMathOperator{\tr}{Tr}
\DeclareMathOperator{\sta}{sc}
\DeclareSymbolFont{rsfscript}{OMS}{rsfs}{m}{n}
\DeclareSymbolFontAlphabet{\mathrsfs}{rsfscript}
\DeclareMathOperator{\rc}{rc}
\DeclareMathOperator{\Syn}{Syn}
\DeclareSymbolFont{rsfscript}{OMS}{rsfs}{m}{n}
\newtheorem{theorem}{Theorem}
\newtheorem{defn}{Definition}
\newtheorem{lemma}{Lemma}
\newtheorem{cor}{Corollary}
\newcommand{\rf}{\rightarrow}
\newcommand{\la}{\langle}
\newcommand{\ra}{\rangle}
\def\mapright#1{\smash{\mathop{\longrightarrow}\limits^{#1}}}
\title{Synchronizing automata and the language of minimal reset words}
\author{Emanuele Rodaro\\
Dipartimento di Matematica\\
  Politecnico di Milano\\
  Piazza Leonardo da Vinci, 32\\
  20133 Milano, Italy}
\date{}
\begin{document}
\maketitle

\begin{abstract}
We study a connection between synchronizing automata and its set $M$ of minimal reset words, i.e., such that no proper factor is a reset word. We first show that any synchronizing automaton having the set of minimal reset words whose set of factors does not contain a word of length at most $\frac{1}{4}\min\{|u|: u\in I\}+\frac{1}{16}$ has a reset word of length at most $(n-\frac{1}{2})^{2}$. In the last part of the paper we focus on the existence of synchronizing automata with a given ideal $I$ that serves as the set of reset words. To this end, we introduce the notion of the tail structure of the (not necessarily regular) ideal $I=\Sigma^{*}M\Sigma^{*}$. With this tool, we first show the existence of an infinite strongly connected synchronizing automaton $\mathrsfs{A}$ having $I$ as the set of reset words and such that every other strongly connected synchronizing automaton having $I$ as the set of reset words is an homomorphic image of $\mathrsfs{A}$. Finally, we show that for any non-unary regular ideal $I$ there is a strongly connected synchronizing automaton having $I$ as the set of reset words with at most $(km^{k})2^{km^{k}n}$ states, where $k=|\Sigma|$, $m$ is the length of a shortest word in $M$, and $n$ is the dimension of the smallest automaton recognizing $M$ (state complexity of $M$). This automaton is computable and we show an algorithm to compute it in time $\mathcal{O}((k^{2}m^{k})2^{km^{k}n})$. 
\end{abstract}

%\keywords{one, two, three, four}
\begin{keywords}
       Synchronizing automaton, Strongly Connected automaton, Cerny's conjecture, Minimal reset word, Ideal
      %\MSC[2010] 20E08 20F10\sep 20M05\sep 20M18\sep 68Q17\sep 68Q45\sep 20M30
\end{keywords}

\section{Introduction}
In this paper we are interested in automata from their dynamical point of view, and not as languages recognizers. Thus, for us an automaton (for short DFA) is just a tuple $\mathrsfs{A} = \la Q,\Sigma,\delta\ra$, where $Q$ is the set of states, $\Sigma$ is the finite alphabet acting on $Q$, and the function $\delta:Q\times\Sigma\to Q$ describes the action of $\Sigma$ on the set $Q$. In literature these objects are usually called \emph{semiautomata}. We may depict an automaton as a labelled digraph having edges $q\mapright{a}p$ whenever $\delta(q,a)=p$ and with the property of being complete: every $a\in\Sigma$ and $q\in Q$ there is an out-going edge $q\mapright{a}p$, and being deterministic: if $q\mapright{a}p$, $q\mapright{a}p'$ are two edges of $\mathrsfs{A}$, then $p=p'$. Throughout the paper we will use the action notation by putting $q\cdot a=\delta(q,a)$, this action naturally extends to $\Sigma^{*}$ and to the subsets of $Q$ in the obvious way. 
Automata are mostly used in theoretical computer science as languages recognizers: by pinpointing an initial state $q_{0}$ and a set of final states $F\subseteq Q$ the automaton $\mathrsfs{A} $ defines the regular language $L[\mathrsfs{A} ]=\{u\in \Sigma^{*}: q_{0}\cdot u\in F\}$, and every regular language is recognized in this way by a finite automaton, see for instance \cite{hop}. 
The interested in automata from their dynamical point of view is mostly motivated by the longstanding Cerny's conjecture regarding the class of synchronizing automata. These are automata having a word $u\in\Sigma^{*}$, called \emph{reset}, sending all the states to a unique one, i.e., $|Q\cdot u|=1$. Cerny's conjecture states that a $n$-state synchronizing automaton has always a reset word of length at most $(n-1)^{2}$, see \cite{Ce64}. The literature around Cerny's conjecture and synchronizing automata is quite impressive and span from the algorithmic point of view to the proof of Cerny's conjecture or the existence of quadratic bounds on the smallest reset word for several classes of automata, see for instance \cite{AlRo, AnVo, BeSz, BeBePe, Dubuc, Epp, GreKi, Kari, Steinb, Trah, Vo_CIAA07}. The best upper bound for the shortest reset word is cubic $(n^{3}-n)/6$ obtained by Pin-Frankl \cite{Frankl,Pin} and recently improved by Szykula in \cite{Szy17} by a factor of 4/46875. For a general survey on synchronizing automata and Cerny's conjecture see \cite{KaVo, Vo_Survey}.
\\
In this paper we continue the language theoretic approach to synchronizing automata initiated in a series of recent papers \cite{GuMasPribFG, PrincIdFI, SOFSEM,PriRoMinWords, PriRo11,ReRo16,ReRo13}. The starting point of such an approach is a simple observation: the set of reset words is a two-sided ideal (ideal for short) of the free monoid $\Sigma^{*}$ that is also a regular language. The natural questions is whether any given regular ideal $I$ is the set of the reset words of some synchronizing automaton. In \cite{SOFSEM} it is observed that the minimal DFA recognizing $I$ is a synchronizing automaton with a sink state, i.e., having a particular state $s$ with transitions $s\mapright{a}s$, $a\in\Sigma$, whose set of reset words is exactly $I$. This simple observation led the author to introduce a new notion of descriptional complexity for the class of regular ideal languages. The \emph{reset complexity} $\rc(I)$ of an ideal $I$ is the number of states of the smallest synchronizing automaton $\mathrsfs{B}$ for which $I$ serves as the set of reset words of $\mathrsfs{B}$. The interesting fact is that Cerny's conjecture holds if and only if $\rc(I)\ge \sqrt{\| I\|}+1$ with $\| I\|=\min\{|w|: w\in I\}$ holds for any ideal language $I$. This observation justifies the study of Cerny's conjecture and synchronizing automata from a language theoretic point of view instead of the structure of the automaton. In this paper we contribute to this point of view and we underly the importance of the set of minimal words in such an approach. 
Any ideal $I$ has a minimal set of generators $\mathcal{M}(I)$ (called the set of minimal words) such that $I=\Sigma^{*}\mathcal{M}(I)\Sigma^{*}$ and $\mathcal{M}(I)$ is factor-free, i.e., any factor $u$ of any $w\in \mathcal{M}(I)$ does not belong to $\mathcal{M}(I)$. We prove that if we consider a synchronizing automaton with $n$ states having set of factors of length $\ell$ of the set $\mathcal{M}(I)$ ($\Fact_{\ell}(\mathcal{M}(I))$) that does not contain a word $u\in\Sigma^{\ell}$, then there is a reset word of length at most $ \frac{n(n-1)}{2}+2\ell$. From this result, as an immediate corollary, we show that Cerny's conjecture holds if this missing factor has length at most $\frac{1}{4}(n^{2}-3n+2)$. Further, a quadratic bound $(n-\frac{1}{2})^{2}$ holds for all the synchronizing automata with $n$ states having set of reset words $I$ satisfying $\Fact_{\ell}(\mathcal{M}(I))\setminus\Sigma^{\ell}\neq\emptyset$ for some $\ell\le\frac{\|I\|}{4}+\frac{1}{16}$. Roughly speaking, a potential counterexample to the Cerny conjecture should be searched in the class of strongly connected synchronizing automata whose set of reset words is an ideal whose set of minimal words contains as factors all the words of length at most $\frac{1}{4}(n^{2}-3n+2)$.
\\
In the second part of the paper we focus on how to build a strongly connected automaton from a given ideal $I$. We show that also for this problem the set of minimal words of an ideal $I$ plays an important role in building such an automaton. The importance of the class of strongly connected synchronizing automata lies on the fact that proving Cerny's conjecture for this class would led to a full solution of this conjecture. In \cite{ReRo16, ReRo13} it is proved in a non-constructive way that any non-unary regular ideal may serve as the set of reset words of some strongly connected synchronizing automaton. Here we introduce the notion of the tail structure of a (not necessarily) regular ideal $I$. With this tool we first show a very natural way to build a strongly connected (in general infinite) synchronizing automaton having $I$ as the set of reset words. Such an automaton is named the maximal lifted strongly connected automaton since every other strongly connected synchronizing automaton having $I$ as the set of reset words is an homomorphic image of this automaton. Finally, using the tail structure we show a more natural way to build a strongly connected synchronizing automaton having a regular ideal $I$ as the set of the reset words. The advantage of this approach is twofold: we provide an algorithm to effectively build such an automaton and the upper bound on the number of states of this automaton is not a double exponential like the bound presented in \cite{ReRo16,ReRo13} but it is $(km^{k})2^{km^{k}n}$ where $k=|\Sigma|$, $m=\|I\|$, and $n$ is the number of states of the minimal DFA accepting the language $\mathcal{M}(I)$ (the state complexity of $\mathcal{M}(I)$, see for instance \cite{BrzJiLi}). The importance of finding ``small'' strongly connected synchronizing automata having a certain regular ideal $I$ as the set of reset words, is justified by the fact that, if Cerny's conjecture holds, then we cannot find such an automaton with a number of states less than $\sqrt{\| I\|}+1$ states. Understanding how to build these automata and why they cannot be less than $\sqrt{\| I\|}+1$ states is a fundamental question.

\section{Some notations}\label{sec: prel}
We collect here some of the notations and basic definitions that will be used throughout the paper. We assume that the finite alphabet $\Sigma$ has more than one element. For $u,v\in \Sigma^{*}$ we say that $u$ is a prefix, suffix, factor of $v$ whenever $v=uu'$, $v=u'u$, $v=u'uu''$ for some $u', u''\in\Sigma^{*}$, respectively. Let $u\in\Sigma^{*}$, for an index $0\le i\le |u|$ we denote by $u[i:]$ ($u[:i]$) the suffix (prefix) of $u$ of length $i$; note that $u[0:]=u[:0]$ is the empty word $\varepsilon$. The $i$-th element of $u$ is denoted by $u[i]$.
For a language $L\subseteq \Sigma^*$ and $u\in\Sigma^*$, we put $Lu=\{xu:x\in L\}$, $uL=\{ux:x\in L\}$. The left (right) quotient of $L$ with respect to $u$ is the set $u^{-1}L=\{v\in\Sigma^{*}:uv\in L\}$ ($Lu^{-1}=\{v\in\Sigma^{*}:vu\in L\}$).
\\
An ideal $I$ on an alphabet $\Sigma$ is a language satisfying $\Sigma^{*}I\Sigma^{*}\subseteq I$. A left (right) ideal is a set $I$ satisfying $\Sigma^{*}I\subseteq I$ ($I\Sigma^{*}\subseteq I$). For an ideal $I$ we denote by $\mathcal{M}(I)$ the set of minimal synchronizing words, i.e. $\mathcal{M}(I)=I\setminus (\Sigma^{+}I\cup I\Sigma^{+})$. Note that $\mathcal{M}(I)$ is the set of generators of the ideal $I$ in the free semigroup $\Sigma^{*}$, i.e., $I=\Sigma^{*}\mathcal{M}(I)\Sigma^{*}$. Throughout the paper we put $\|I\|=\min\{|u|:u\in I\}$, 
note that $\|I\|=\min\{|u|: u\in \mathcal{M}(I)\}$.
Let $\mathrsfs{A}=\la Q,\Sigma,\delta\ra$ be a synchronizing automaton, let $\Syn(\mathrsfs{A})$ denote the set of reset words. As we have already noted $I=\Syn(\mathrsfs{A})$ is an ideal which is also regular since it is recognized by the power automaton of $\mathrsfs{A}$ with initial state $Q$ and final set of states all the singletons $\{q\}$, $q\in Q$. Since in this paper we are dealing with ideals that arise from the set of reset words of a synchronizing automaton, most of the time we will assume $I$ regular, except in Section \ref{sec: reset right decomposition} when we generalize the notion of reset left regular decomposition and we introduce the maximal lifted strongly connected automaton associate to a (non necessarily regular) ideal.
\\
An \emph{automaton homomorphism} (simply an homomorphism) $\varphi:\mathrsfs{A}\rf\mathrsfs{B}$ between the two semiautomata $\mathrsfs{A}=\la Q,\Sigma,\delta\ra$, $\mathrsfs{B}=\la T,\Sigma,\xi\ra$ is a map $\varphi:Q\rf T$ preserving the actions, i.e. $\varphi(\delta(q,a))=\xi(\varphi(q),a)$ for all $a\in\Sigma$.

\section{Cerny's conjecture for ideals having minimal reset words with a short missing factor}
In this section we prove quadratic bounds and Cerny's conjecture for synchronizing automata with set of minimal reset words that does not have a sufficiently small factor. For a language $L\subseteq \Sigma^{*}$, the set of factors of length $\ell$ is denoted by $\Fact_{\ell}(L)$. In case $\Fact_{\ell}(L)\neq \Sigma^{\ell}$ any word $ \Sigma^{\ell}\setminus \Fact_{\ell}(L)$ is called a \emph{missing factor}.
\\
For a set $H\subseteq \Sigma^{*}$ and a word $u\in \Sigma^{*}$ we denote by $u\wedge_{s} H$ the maximal suffix $h$ of $u$ such that $h$ is a prefix of some word in $H$. 
\begin{theorem}\label{theo: missing factor}
Let $\mathrsfs{A}=\langle Q,\Sigma,\delta\rangle$ be a synchronizing automaton with set of minimal reset words $M=\mathcal{M}(\Syn(\mathrsfs{A}))$. If $\Fact_{\ell}(M)\neq \Sigma^{\ell}$, then there is a reset word $u$ with
$$
|u|\le \frac{n(n-1)}{2}+2\ell
$$
\end{theorem}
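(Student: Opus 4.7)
The plan is to split the argument into a trivial reduction followed by an iterative shrinking scheme that uses the missing factor $w$ in a controlled way, exploiting the operator $u\wedge_{s} M$ introduced just above the statement.

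\textbf{Reduction.} First I would dispose of the trivial case. If $w$ happens to be a reset word, then because $\Syn(\mathrsfs{A})=\Sigma^{*}M\Sigma^{*}$, the word $w$ must contain some $m\in M$ as a factor, giving $\|M\|\leq\ell$ and hence a reset word of length at most $\ell\leq \frac{n(n-1)}{2}+2\ell$. So I assume $|Q\cdot w|\geq 2$ from now on.

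\textbf{Suffix--prefix lemma.} A small but crucial observation I would prove first is that for any word $u$ ending in $w$ one has $|u\wedge_{s} M|<\ell$. Indeed, any suffix of $u$ of length at least $\ell$ would have $w$ as a prefix; if this suffix also lay in $\Pref(M)$, then $w$ would be a prefix, and in particular a factor, of some element of $M$, contradicting $w\notin\Fact_{\ell}(M)$. In other words, reading $w$ ``resets'' the Aho--Corasick-type progress counter $|u\wedge_{s} M|$ below $\ell$.

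\textbf{Iterative shrinking.} Starting from $S_{0}=Q$, I would construct the reset word as a concatenation $u=v_{1}v_{2}\cdots v_{r}$ with $S_{i}=S_{i-1}\cdot v_{i}$ and $|S_{i}|<|S_{i-1}|$. The pair-graph $\mathrsfs{A}^{[2]}$ (unordered pairs of distinct states together with a merged sink) would be the main combinatorial object. The aim is a per-step bound of the form $|v_{k}|\leq k-1$ when $|S_{i-1}|=k$, so that telescoping over $k=2,\dots,n$ gives exactly the quadratic contribution $\sum_{k=2}^{n}(k-1)=\frac{n(n-1)}{2}$, while the additive $2\ell$ absorbs the at most two applications of $w$ needed to ``reset'' the $\wedge_{s}$-state when the greedy merging move is not available.

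\textbf{Main obstacle.} The delicate point is obtaining the linear per-step shrink $|v_{k}|\leq k-1$. The naive pair-graph argument only yields $\binom{n}{2}$ per shrink, which would telescope to a cubic overall bound. The missing factor $w$ must be used essentially here: when the greedy shrink stalls, inserting $w$ drives the $\wedge_{s}$-progress below $\ell$, and this structural restriction on which prefixes of $M$ can appear as suffixes of the read word should be combined with a short merging move in $\mathrsfs{A}^{[2]}$ to finish off a pair inside $S_{i-1}$. Making this ``stall-breaking'' step precise and checking that $w$ is invoked at most twice in total (contributing the $2\ell$ overhead) is the step I would expect to require the most care.
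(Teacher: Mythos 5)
Your suffix--prefix observation is indeed the key structural property of the missing factor and is essentially the one the paper exploits (modulo a slip in phrasing: if $u$ ends in $w$ and a suffix $v$ of $u$ with $|v|\ge\ell$ lies in $\Pref(M)$, then $w$ is a \emph{suffix} of $v$, hence a factor of some word in $M$). The fatal gap is in the iterative-shrinking scheme. The per-step bound $|v_{k}|\le k-1$ is not something the pair automaton, the missing factor, or any known technique delivers; proving it in general would settle \v{C}ern\'y's conjecture outright. Inserting $w$ resets the $\wedge_{s}$-progress, but you are then left with the same automaton and still need up to $\binom{n}{2}$ more letters to merge the next pair, so the naive telescoping really does give a cubic bound and the ``$2\ell$ absorbs two insertions'' accounting has no way to close the gap. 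You have correctly identified the key lemma but attached it to a strategy that cannot succeed.

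The paper avoids iteration altogether with a single extremal argument, and bounds the length of a \emph{minimal} reset word $m\in M$ rather than the length of a synchronizing run read from $Q$. Let $\Lambda_{\ell}$ be the family of subsets $Q\cdot u$ with $u\notin I$ and $|u\wedge_{s}M|\le\frac{n(n-1)}{2}+\ell$; it is nonempty since $Q=Q\cdot\varepsilon\in\Lambda_{\ell}$. Take $Z=Q\cdot u\in\Lambda_{\ell}$ of minimal cardinality. If $|Z\cdot w|=1$, some $m\in M$ is a factor of $uw$; since $m$ is not a factor of $u$, it is a prefix of $vw$ for a suffix $v$ of $u$ lying in $\Pref(M)$, so $|m|\le|u\wedge_{s}M|+\ell\le\frac{n(n-1)}{2}+2\ell$. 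Otherwise take a pair-merging word $z$ with $|z|\le\binom{n}{2}$ and $|Z\cdot wz|<|Z\cdot w|$. If $|Z\cdot wz|=1$, the witness $m\in M$ must be a factor of $wz$ (it cannot straddle $w$, since $w$ is not a factor of $M$, and it is not a factor of $uw$), so $|m|\le\ell+\binom{n}{2}$. If $|Z\cdot wz|>1$, then $|uwz\wedge_{s}M|=|wz\wedge_{s}M|\le|wz|\le\ell+\binom{n}{2}$ --- again because no suffix of $uwz$ longer than $wz$ can be a prefix of $M$ without forcing $w$ to be a factor of $M$ --- so $Z\cdot wz\in\Lambda_{\ell}$ with $|Z\cdot wz|<|Z|$, contradicting minimality. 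You should replace the per-step bookkeeping with this extremal argument over $\Lambda_{\ell}$; the quadratic $\binom{n}{2}$ then enters exactly once rather than once per shrinking step.
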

\begin{proof}
Let $I=\Syn(\mathrsfs{A})$ and consider the following set 
$$
\Lambda_{\ell}=\left\{H=Q\cdot v\mbox{ for some }  v\in\Sigma^{*}\setminus I\mbox{ such that }|v\wedge_{s} M|\le  \frac{n(n-1)}{2}+\ell\right\}
$$
Let $Z\in\Lambda_{\ell}$ be an element with minimal cardinality, and let $u\in\Sigma^{*}$ such that $Z=Q\cdot u$ with $ |u\wedge_{s} M|\le  \frac{n(n-1)}{2}+\ell$. Let $x\in \Sigma^{\ell}\setminus \Fact_{\ell}(M)$. If $|Z\cdot x|=1$, then $u\in I$. Indeed, we clearly have that $ux\in I$ since $|Q\cdot (ux)|=|Z\cdot x|=1$. Thus, there is a minimal reset word $m\in M$ such that $m$ is a factor of $ux$. Since $u\notin I$, $m$ is not a proper factor of $u$, and so there is a suffix $v$ of $u$ which is a prefix of $m$ so that $m$ is a prefix of $vx$. By definition, since $Z\in\Lambda_{\ell}$ and $Z=Q\cdot u$ we have:
$$
|v|\le |u\wedge_{s} M|\le  \frac{n(n-1)}{2}+\ell
$$
Hence, we have a reset word $m$ of length $|m|\le |vx|\le  n(n-1)/2+\ell+|x|\le n(n-1)/2+2\ell$ and we are done. Therefore, we may assume that $|Z\cdot x|\neq 1$. Since $|Z\cdot x|>1$, we may collapse a pair of states of $Z\cdot x$. Hence, we may find a word $w$ with 
$$
|w|\le \frac{n(n-1)}{2}
$$
such that $|Z\cdot xw|<|Z\cdot x|$. We consider the following two cases:
\begin{itemize}
\item $|Z\cdot xw|=1$, whence $uxw\in I$, and let $m\in M$ be a minimal reset word which is a factor of $uxw$. Note that $m$ is not a factor of $ux$, for if $ux\in I$ would imply $|Z\cdot x|=1$, a contradiction. Since $x$ is not a factor of any $m\in M$, then necessarily $m$ is a factor of $xw$. Hence, there is a reset word $m$ of length:
$$
|m|\le |x|+|w|\le \ell+\frac{n(n-1)}{2}
$$
and we are done. 
\item $|Z\cdot xw|>1$. We claim that $Z\cdot xw\in \Lambda_{\ell}$. Indeed, first note that $uxw$ is not reset and thus $uxw\in\Sigma^{*}\setminus I$. We claim that if $uxw\wedge_{s} h=t$, for some $h\in M$, then necessarily $xw\wedge_{s} h=t$. Indeed, if there is a word $u'\in\Sigma^{*}$ such that $u=u''u'$ and $u'xw$ is a prefix of $h$, then $x\in \Fact_{\ell}(M)$, a contradiction. Therefore, we have:
$$
uxw\wedge_{s}M=xw\wedge_{s}M
$$
and so:
$$
|uxw\wedge_{s}M|= |xw\wedge_{s}M|\le  |x|+|w|\le \ell+\frac{n(n-1)}{2}
$$
from which we may conclude that $Z'=Z\cdot xw=Q\cdot uxw\in \Lambda_{\ell}$. However, the inequality $|Z'|<|Z\cdot x|\le |Z|$ contradicts the minimality of the cardinality of $Z\in \Lambda_{\ell}$.
\end{itemize}

%\begin{figure}
%\centering
%\begin{tikzpicture}
%
%\node [draw,rectangle,minimum width=2cm,minimum height=0.2cm] {};
%
%\begin{scope}[xshift=2cm]
%\node [draw,rectangle,minimum width=2cm,minimum height=0.2cm, label=$v$] {};
%\end{scope}
%
%\begin{scope}[xshift=4.5cm]
%\node [draw,rectangle,minimum width=3cm,minimum height=0.2cm, label=$x$] {};
%\end{scope}
%
%\begin{scope}[xshift=5.5cm,yshift=-0.7cm]
%\node [transform shape,draw,rectangle,minimum width=1cm,minimum height=0.2cm,label=$h$, fill=black] {};
%\end{scope}
%
%\begin{scope}[xshift=1cm,yshift=-0.7cm]
%\node [transform shape,draw,rectangle,minimum width=4cm,minimum height=0.2cm,label=$u$] {};
%\end{scope}
%
%
%\begin{scope}[xshift=7cm,yshift=-1.4cm]
%\node [transform shape,draw,rectangle,minimum width=4cm,minimum height=0.2cm,label=$m$] {};
%\end{scope}
%
%
%\begin{scope}[xshift=5cm,yshift=-2.1cm]
%\node [transform shape,draw,rectangle,minimum width=4cm,minimum height=0.2cm,label=$m$] {};
%\end{scope}
%\end{tikzpicture}
%\caption{}
%\end{figure}
\end{proof}
From the previous theorem we derive the following results pointing toward a positive solution of Cerny's conjecture.
\begin{cor}
The following facts hold:
\begin{itemize}
\item If a synchronizing automaton $\mathrsfs{A}$ with $n$ states has a set of minimal synchronizing words with a missing factor of length at most $\frac{1}{4}(n^{2}-3n+2)$, then Cerny's conjecture holds;
\item For any regular ideal $I$ having a set of minimal elements $\mathcal{M}(I)$ not containing a factor of length $\ell\le \frac{\|I\|}{4}+\frac{1}{16}$, then there is a reset word of length at most $(n-\frac{1}{2})^{2}$.
\end{itemize}
\end{cor}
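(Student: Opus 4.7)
Both items will follow directly from Theorem \ref{theo: missing factor}, which produces a reset word of length at most $\frac{n(n-1)}{2}+2\ell$ whenever $\mathcal{M}(I)$ has a missing factor of length $\ell$. My plan is simply to compare this bound with the target in each case.

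For the first item, I would substitute $\ell\le\frac{1}{4}(n^{2}-3n+2)$ into the bound from Theorem \ref{theo: missing factor}; the arithmetic collapses to
\[
\frac{n(n-1)}{2}+\frac{n^{2}-3n+2}{2}=\frac{2n^{2}-4n+2}{2}=(n-1)^{2},
\]
which is exactly the Cerny bound.

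For the second item, the hypothesis $\ell\le\frac{\|I\|}{4}+\frac{1}{16}$ refers to $\|I\|$ itself, so a short bootstrap is needed. A first application of Theorem \ref{theo: missing factor} yields $\|I\|\le\frac{n(n-1)}{2}+2\ell$; substituting the hypothesis and solving for $\|I\|$ gives $\|I\|\le n(n-1)+\frac{1}{4}$. Because $\|I\|$ is an integer, this strengthens to $\|I\|\le n(n-1)$. I would then feed this refined estimate back into the hypothesis to get $\ell\le\frac{n(n-1)}{4}+\frac{1}{16}$, and a final application of Theorem \ref{theo: missing factor} would deliver a reset word of length at most
\[
\frac{n(n-1)}{2}+\frac{n(n-1)}{2}+\frac{1}{8}=n^{2}-n+\frac{1}{8}\le\left(n-\frac{1}{2}\right)^{2},
\]
as required.

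The only mildly delicate step is the self-referential form of the second hypothesis; the integrality of $\|I\|$ is precisely what makes the constants $\frac{1}{16}$ and $\frac{1}{4}$ align correctly, and I expect this to be the only place where care is needed.
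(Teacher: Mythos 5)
Your proof is correct and follows essentially the same route as the paper: apply Theorem~\ref{theo: missing factor} to bound $\|I\|$, then exploit the self-referential hypothesis $\ell\le\frac{\|I\|}{4}+\frac{1}{16}$ to solve for $\|I\|$. The integrality-and-bootstrap detour in your second item is harmless but unnecessary---once you have $\|I\|\le n(n-1)+\frac{1}{4}=(n-\tfrac{1}{2})^{2}$ you are already done, since $\|I\|$ is itself the length of the shortest reset word.
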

\begin{proof}
The first statement follows directly from Theorem \ref{theo: missing factor}. Regarding the second one, we clearly have 
$$
\|I\|\le \frac{n(n-1)}{2}+\frac{\|I\|}{2}+\frac{1}{8}
$$
hence: $\|I\|\le n(n-1)+\frac{1}{4}$, from which we derive $n\ge \frac{1}{2}+\sqrt{\|I\|}$, whence the minimal reset word is at most $(n-\frac{1}{2})^{2}$.
\end{proof}

\section{The reset left (right) decomposition of an ideal}\label{sec: reset right decomposition}

The notion of reset left (right) regular decomposition of a regular ideal has been introduced in \cite{ReRo13, ReRo16} as a intermediate step to show that such an ideal on a non-unary alphabet may serve as the set of reset words of some strongly connected synchronizing automaton. 
\begin{defn}\label{defn: regular dec}
  A reset left regular decomposition of a regular ideal $I$ is a finite collection $\mathcal{I}=\{I_1, \ldots I_{k}\}$ of disjoint left ideals $I_i$ of $\Sigma^*$ that partitions $I$ and satisfying  
  \begin{itemize}
  \item[i)] For any $a\in\Sigma$ and $I_{i}\in \mathcal{I}$, there is a $I_{j}\in \mathcal{I}$
    such that $I_ia\subseteq I_j$.
  \item[ii)]  For any $u\in\Sigma^*$ if $Iu\subseteq I_i $, for some $I_{i}\in \mathcal{I}$, then $u\in I$.
  \end{itemize}
\end{defn}
There is a categorical equivalence between this notion and the class of strongly connected synchronizing automata. Denote by $\textbf{RLD}_{\Sigma}$ the category of the reset left regular
decompositions on the alphabet $\Sigma$, where an arrow between two objects $f:\mathcal{I}\to \mathcal{J}$ is any map $f:\mathcal{I}\to \mathcal{J}$ such that $I_{i}\subseteq f(I_{i})$ for all $I_{i}\in\mathcal{I}$. We also  consider the category $\textbf{SCSA}_{\Sigma}$ formed by the class of synchronizing automata on the same alphabet $\Sigma$, where $\varphi:\mathrsfs{A}\rightarrow\mathrsfs{B}$ is an arrow if $\varphi$ is an (automaton) homomorphism. Note that any homomorphism between strongly connected automata is necessarily surjective. The importance of these two categories lies in Theorem \cite[Theorem 2.2]{ReRo16} where it is proved that an ideal (regular) language $I$ serves as the set of reset words of some strongly connected synchronizing automaton if and only if it has a reset left regular decomposition. Moreover $\textbf{RLD}_{\Sigma}$ and $ \textbf{SCSA}_{\Sigma}$ are equivalent categories via the two functors $\mathcal{A},\mathcal{D}$ defined by:
  \begin{itemize}
  \item   
   $\mathcal{A}:\textbf{RLD}_{\Sigma}\rf \textbf{SCSA}_{\Sigma}$ defined by
  $$
  \mathcal{A}:\mathcal{I}=\{I_1,\ldots, I_{k}\}\mapsto \mathcal{A}(\mathcal{I})=\la \mathcal{I},\Sigma,\eta\ra 
  $$
  with $\eta(I_i,a)=I_j$ for $a\in\Sigma$ if and only if
  $I_{i}a\subseteq I_{j}$, and if $f:\mathcal{I}\rightarrow \mathcal{J}$ then $\mathcal{A}(f)$ is the homomorphism $\varphi:\mathcal{A}(\mathcal{I})\rightarrow \mathcal{A}(\mathcal{J})$ defined by $\varphi(I_{i})=f(I_{i})$.
  \item $\mathcal{D}:\textbf{SCSA}_{\Sigma}\rf\textbf{RLD}_{\Sigma}$ defined by
  $$
  \mathcal{D}:\mathrsfs{A}=\la Q,\Sigma,\delta\ra\mapsto
   \mathcal{I}(\mathrsfs{A})=\left\{I_{q}=\{u\in\Sigma^{*}:\delta(Q,u)=q\}, q: Q\right\}
  $$
 and if $\varphi:\mathrsfs{A}\rf\mathrsfs{B}$ is an arrow between $\mathrsfs{A}=\la Q,\Sigma,\delta\ra$ and $\mathrsfs{B}=\la T,\Sigma,\xi\ra$, then $\mathcal{D}(\varphi)$ is the arrow defined by $f:\mathcal{I}(\mathrsfs{A})\rf \mathcal{I}(\mathrsfs{B})$ which sends $I_{q}\mapsto I_{\varphi(q)}$. 
\end{itemize}
In the same paper it is proved that any regular ideal on a non-unary alphabet admits a reset left regular decomposition, from which we immediately get that any ideal on a non-unary alphabet serves as the set of reset words of some strongly connected synchronizing automaton. This fact is a key ingredient that transfer the study of strongly connected synchronizing automata to the study of reset left regular decomposition of an ideal. Let us now extend this notion to a non necessarily regular ideal. The finiteness condition in the definition of reset left regular decomposition may be removed, but we need to extend the notion of automaton and consider automata with an infinite number of states. In this setting we still call an automaton a tuple $\mathrsfs{A}=\la Q,\Sigma,\delta\ra$, where $\Sigma$ is still finite but the set of states $Q$ may be infinite (countable many), and $\delta:Q\times \Sigma\to \Sigma$ is the usual transition function. We may still use the action-notation and when $\delta$ is clear from the context we will denote $q\cdot a$, $q\in Q$, $a\in\Sigma$, the state $\delta(q,a)$. This action is extended to the powerset of $Q$ in the obvious way. A semiautomaton is still called strongly connected whenever for any $q,p\in Q$ there is a finite word $w\in \Sigma^{*}$ such that $q\cdot w=p$. Moreover, the notion of synchronizing automaton is unaffected considering infinite semiautomata and we still call $\mathrsfs{A}$ synchronizing if there exists a word $w\in \Sigma^{*}$ such that $|Q\cdot w|=1$. The set of reset words $\Syn(\mathrsfs{A})$ is still an ideal, but it is not, in general, regular. In this setting we may define a \emph{reset left decomposition} as a reset left regular decomposition without the finiteness condition. 
\begin{defn}\label{defn: regular dec}
  A reset left decomposition of a ideal $I$ is a countable collection $\mathcal{I}$ of disjoint left ideals of $\Sigma^*$ that partitions $I$ and it satisfies the following conditions:  
  \begin{itemize}
  \item[i)] For any $a\in\Sigma$ and $J\in \mathcal{I}$, there is a $J'\in \mathcal{I}$
    such that $Ja\subseteq J'$.
  \item[ii)]  For any $u\in\Sigma^*$ if $Iu\subseteq J$, for some $J\in \mathcal{I}$, then $u\in I$.
  \end{itemize}
\end{defn}
With a slight abuse of notation, we still denote by $\textbf{RLD}_{\Sigma}$ and $ \textbf{SCSA}_{\Sigma}$ the categories of the reset left decompositions and of strongly connected synchronizing automata (with a possible infinite number of states). The morphisms in these categories are similar to the morphisms defined in the finite case mentioned before. The finiteness condition in the reset left regular decomposition corresponds to the finiteness of the associated strongly connected synchronizing automaton, and since the finiteness condition is not used in the proof of \cite[Theorem 2.2]{ReRo16}, then using verbatim this proof it is possible to prove the following analogous theorem.
\begin{theorem}\label{theo: general characterization}
An ideal $I$ is the set of reset words of some strongly connected synchronizing semiautomaton if and only if it admits a reset left decomposition. Moreover the categories $\textbf{RLD}_{\Sigma}$ and $ \textbf{SCSA}_{\Sigma}$ are equivalent via the two functors $\mathcal{A},\mathcal{D}$ defined by:
  \begin{itemize}
  \item   
   $\mathcal{A}:\textbf{RLD}_{\Sigma}\rf \textbf{SCSA}_{\Sigma}$ which sends 
  $$
  \mathcal{A}:\mathcal{I}\mapsto \mathcal{A}(\mathcal{I})=\la \mathcal{I},\Sigma,\eta\ra 
  $$
  with $\eta(J,a)=J'$ for $a\in\Sigma$ if and only if
  $Ja\subseteq J'$, and if $f:\mathcal{I}\rightarrow \mathcal{J}$ then $\mathcal{A}(f)$ is the homomorphism $\varphi:\mathcal{A}(\mathcal{I})\rightarrow \mathcal{A}(\mathcal{J})$ defined by $\varphi(J)=f(J)$.
  \item $\mathcal{D}:\textbf{SCSA}_{\Sigma}\rf\textbf{RLD}_{\Sigma}$ defined by
  $$
  \mathcal{D}:\mathrsfs{A}=\la Q,\Sigma,\delta\ra\mapsto
   \mathcal{I}(\mathrsfs{A})=\left\{J_{q}=\{u\in\Sigma^{*}:\delta(Q,u)=q\}, q: Q\right\}
  $$
 and if $\varphi:\mathrsfs{A}\rf\mathrsfs{B}$ is an arrow between $\mathrsfs{A}=\la Q,\Sigma,\delta\ra$ and $\mathrsfs{B}=\la T,\Sigma,\xi\ra$, then $\mathcal{D}(\varphi)$ is the arrow defined by sending $J_{q}\to J_{\varphi(q)}$.
\end{itemize}
\end{theorem}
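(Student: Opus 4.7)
The plan is to replay the proof of \cite[Theorem 2.2]{ReRo16} with $\mathcal{I}$ allowed countable and $Q$ countable, and check that the argument in fact never invokes the finiteness of either. I would organize the verification around three tasks: (a) show $\mathcal{A}(\mathcal{I})$ is a well-defined strongly connected synchronizing semiautomaton with $\Syn(\mathcal{A}(\mathcal{I}))=I$ whenever $\mathcal{I}$ is a reset left decomposition; (b) show $\mathcal{D}(\mathrsfs{A})$ is a reset left decomposition of $\Syn(\mathrsfs{A})$ whenever $\mathrsfs{A}$ is strongly connected synchronizing; (c) verify the two natural isomorphisms witnessing the equivalence of categories.

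For (a), the map $\eta(J,a)=J'$ is well-defined because condition (i) provides such a $J'$ and uniqueness follows from disjointness of the partition together with the inclusion $Ja\subseteq I$ (note $J\subseteq I$ and $I$ is a two-sided ideal, so $Ja\subseteq I$ sits inside exactly one member of $\mathcal{I}$). For any $J_{1},J_{2}\in\mathcal{I}$ and any $w\in J_{2}$, the left-ideal property gives $J_{1}w\subseteq \Sigma^{*}w\subseteq J_{2}$, so $J_{1}\cdot w=J_{2}$; this establishes strong connectedness in one stroke and also gives the non-emptiness of the transitions. A word $w$ resets $\mathcal{A}(\mathcal{I})$ iff all classes are sent into a common $J_{0}$, which is equivalent to $Iw=\bigcup_{J\in\mathcal{I}}Jw\subseteq J_{0}$; by (ii) this is in turn equivalent to $w\in I$. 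For (b), putting $J_{q}=\{u\in\Sigma^{*}:Q\cdot u=\{q\}\}$ one checks that the $J_{q}$'s are pairwise disjoint left ideals whose union is $\Syn(\mathrsfs{A})=I$, and condition (i) is immediate from $J_{q}a\subseteq J_{q\cdot a}$. Condition (ii) is the only place where strong connectivity really bites: if $Iu\subseteq J_{q}$, then for every $p\in Q$ there is a reset word $w_{p}\in I$ with $Q\cdot w_{p}=\{p\}$ (post-multiply any fixed reset word by a path driving its image to $p$, which exists by strong connectivity), and $w_{p}u\in J_{q}$ forces $p\cdot u=q$; varying $p$ gives $Q\cdot u=\{q\}$, hence $u\in I$.

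The categorical part is routine once (a) and (b) are settled. Functoriality of $\mathcal{A}$ and $\mathcal{D}$ on morphisms is a direct check, and the one step worth pointing out is that under $\mathcal{D}\circ \mathcal{A}$ the class $J\in\mathcal{I}$ should equal $J_{J}=\{u:\mathcal{I}\cdot u=\{J\}\}$: the forward inclusion uses again that $J$ is a left ideal, and the reverse inclusion uses (ii) to derive $u\in I$, so that $u\in J_{0}$ for some (unique) $J_{0}\in\mathcal{I}$, after which the arguments $u^{2}\in\Sigma^{*}u\subseteq J_{0}$ and $u^{2}\in J_{0}u\subseteq J$ together with disjointness of the partition force $J_{0}=J$. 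The main obstacle I anticipate is not any single verification but a diligent bookkeeping: one must confirm that every appeal to ``a state of $\mathrsfs{A}$'' or ``a class of $\mathcal{I}$'' in the original regular proof is purely set-theoretic and never relies on $Q$ or $\mathcal{I}$ being finite; once this audit is done, the theorem follows verbatim from the argument already in the literature.
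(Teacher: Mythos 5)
Your proposal is correct and takes the same route the paper does: the paper's ``proof'' is literally the sentence preceding the theorem, namely that the argument of \cite[Theorem 2.2]{ReRo16} never uses finiteness of $Q$ or of $\mathcal{I}$ and therefore carries over verbatim. You reconstruct that argument in full---well-definedness and strong connectedness of $\mathcal{A}(\mathcal{I})$, the equivalence $\Syn(\mathcal{A}(\mathcal{I}))=I$ via condition (ii), the fact that $\mathcal{D}(\mathrsfs{A})$ is a reset left decomposition with (ii) supplied by strong connectivity, and the natural isomorphism $J=J_{J}$ via the $u^{2}\in J_{0}\cap J$ trick---and each step is indeed purely set-theoretic, so the audit you anticipate goes through. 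The one small gloss worth making explicit is uniqueness of the transition $\eta(J,a)$: existence is condition (i), and uniqueness follows because $Ja\neq\emptyset$ (the parts of the partition are nonempty) together with disjointness pins down the receiving class; your parenthetical ``$Ja\subseteq I$ sits inside exactly one member'' should really invoke nonemptiness plus disjointness rather than just $Ja\subseteq I$.
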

Proving the existence of a reset left \textbf{regular} decomposition of a regular language is in general an hard task. However, in the next section we show a natural and easy way to decompose a general ideal into a reset left decomposition using the notion of tail structure of an ideal. As an unexpected consequence, we show that every ideal serves as the set of a reset words of a strongly connected (countably infinite) synchronizing automaton.  
The notion of tail structure also allows us to give a more natural and explicit (with respect to the one presented in \cite{ReRo13, ReRo16}) way to decompose a regular ideal by a reset left \textbf{regular} decomposition. 

\section{Tail structure and reset left (regular) decomposition.}
Throughout this section $M=\mathcal{M}(I)$ will denote the language of the minimal reset words of an ideal $I$. 
\begin{defn}[tail structure]
For an element $u\in I$, the \emph{last factor} $\lambda(u)\in M$ is the minimal reset word such that
$$
u=u'\lambda(u)v
$$
and if $\lambda(u)=az$ for some letter $a\in\Sigma$, $z\in\Sigma^{*}$, then $zv$ does not contain any word of $I$ (and so of $M$) as a factor; the word $zv$ in the above factorization is called the \emph{tail} of $u$ and it is denoted by $\tau(u)$. Equivalently, the tail $\tau(u)$ is the maximal suffix of $u$ that does not belong to $I$, see the figure below. The ordered pair $\sigma(u)=(\lambda(u), \tau(u))$ is called \emph{the tail structure} of $u$.
\end{defn}
\begin{center}
\begin{tikzpicture}
\begin{scope}[xshift=0]
\node [draw,rectangle,minimum width=2cm,minimum height=0.2cm, label=$u'$] {};
\end{scope}
\begin{scope}[xshift=2cm]
\node [draw,rectangle,minimum width=2cm,minimum height=0.2cm, label=$\lambda(u)$] {};
\end{scope}

\begin{scope}[xshift=4.5cm]
\node [draw,rectangle,minimum width=3cm,minimum height=0.2cm, label=$v$] {};
\end{scope}

\begin{scope}[yshift=-1cm, xshift=1.2cm]
\node [draw,rectangle,minimum width=0.39cm,minimum height=0.2cm, label=$a$] {};
\end{scope}

\begin{scope}[yshift=-1cm, xshift=3.7cm]
\node [draw,rectangle,minimum width=4.6cm,minimum height=0.2cm, label=$\tau(u)$] {};
\end{scope}

\end{tikzpicture}
\end{center}
Note that the factorization $u=u'\lambda(u)v$ is univocally determined: if $h_{1}, h_{2}\in M$ are two words satisfying the condition of the definition of last factor, then either $h_{1}$ is a prefix of $h_{2}$, or vice-versa, however, this contradicts the fact that $h_{1}, h_{2}\in M$. For technical reasons that will be clear later, we extend the notion of tail to all the words $u\in\Sigma^{*}\setminus I$ by putting $\tau(u)=u$ and for the tail structure we put $\sigma(u)=(\varepsilon,u)$.

\subsection{The maximal lifted strongly connected automaton}
In this section we show that every strongly connected synchronizing automaton $\mathrsfs{A}$ with $\Syn(\mathrsfs{A})=I$ is an homomorphic image of a unique (infinite) strongly connected synchronizing automaton $\mathrsfs{L}(I)$, called \emph{the maximal lifted strongly connected automaton}.
The tail structure is a natural starting point to exhibit a reset left decomposition of an ideal. Indeed, consider the set of all the possible tail structures of the elements of $I$: $T=\{(x,y): \sigma(u)=(x,y)\mbox{ for some } u\in I\}$. Note that, in case $I$ is regular, the union of all the tails of the elements of $I$ is a regular language:
$$
\bigcup_{u\in I}\tau(u)=\bigcup_{a\in\Sigma}(a^{-1}M\Sigma^{*}\setminus I)
$$
For any $(x,y)\in T$ we define  
$$
I(x,y)=\{v\in I: \sigma(v)=(x,y)\}
$$
It is not difficult to see that if the set $I(x,y)$ is non-empty, and $x=az$, for some $a\in\Sigma, z\in\Sigma^{*}$, then 
$$
I(x,y)=\Sigma^{*}ay
$$
Moreover, these left ideals form a partition of the ideal $I$, thus we may consider the \emph{tail structure decomposition} of $I$:
$$
\mathrsfs{T}(I)=\left\{I(x,y): (x,y)\in T\right\}
$$
This family forms a reset left decomposition as the following theorem shows.
\begin{theorem}\label{theo: maximal lifted}
With the notation above, the family $\mathrsfs{T}(I)$ is a reset left decomposition of $I$. Moreover, for any other  reset left decomposition $\mathrsfs{D}$ of $I$ there is an epimorphism $\varphi:\mathrsfs{T}(I)\to \mathrsfs{D}$ defined by 
$$
\varphi(I(x,y))=J\mbox{ whenever there is }J\in\mathcal{D}\mbox{ such that }I(x,y)\subseteq J
$$
\end{theorem}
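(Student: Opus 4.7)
The plan is to verify the four defining conditions that make $\mathrsfs{T}(I)$ a reset left decomposition, and then to construct the universal arrow from the identity $I(x,y)=\Sigma^{*}ay$ given in the text.

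For the decomposition conditions, I would first invoke the uniqueness of the factorization defining $\sigma(u)$ (already noted in the paper) to see that the blocks $I(x,y)$ are disjoint and cover $I$. The identity $I(x,y)=\Sigma^{*}ay$ (for $x=az$ and $y=zv_{1}$) then makes each block a principal left ideal; the inclusion ``$\subseteq$'' is just the unpacked factorization $v=v_{0}\cdot a\cdot y$, while ``$\supseteq$'' requires checking that when one prepends any $w\in\Sigma^{*}$ to $ay$, no minimal reset word in $M$ can start strictly later than position $|w|+1$ inside $way$ (because $y$ contains no factor of $I$ by definition of tail), and the factor of $M$ starting at position $|w|+1$ is forced to be $x$ by factor-freeness of $M$. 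For property $(i)$ I would take $v=v_{0}ay\in I(x,y)$ and analyse $vc=v_{0}\cdot a\cdot yc$ by splitting on whether $yc\in I$: if $yc\notin I$ one obtains $\sigma(vc)=(x,yc)$; if $yc\in I$, then factor-freeness of $M$ isolates a unique shortest suffix $x'=a'z'$ of $yc$ lying in $M$, giving $\sigma(vc)=(x',z')$. In both cases $\sigma(vc)$ depends only on $(x,y)$ and $c$, so $I(x,y)\cdot c$ sits inside a single block.

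The main obstacle is property $(ii)$: showing that $Iu\subseteq I(x,y)$ forces $u\in I$. I would proceed by contradiction. Assume $u\notin I$ and fix any $w\in I$, so $wu\in I(x,y)$ and $\tau(wu)=y$. Since $u$ is itself a suffix of $wu$ that is not in $I$, necessarily $|y|\ge|u|$, so $y=pu$ for some $p\in\Sigma^{*}$. Writing $x=az$ and unpacking the canonical factorization $wu=w_{0}\cdot a\cdot y=w_{0}\cdot a\cdot p\cdot u$ yields $w=w_{0}\cdot ap$, so every $w\in I$ must end with the fixed string $ap$. This contradicts $|\Sigma|\ge 2$: for any letter $c\in\Sigma$ and any minimal reset word $m\in M$, the word $mc$ lies in $I$ and ends in $c$; choosing $c$ different from the last letter of $ap$ (the letter $a$ itself if $p=\varepsilon$) produces an element of $I$ that cannot end with $ap$, the desired contradiction.

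For the universal statement, given any other reset left decomposition $\mathrsfs{D}$ of $I$, I would use $I(x,y)=\Sigma^{*}ay$ together with the fact that $ay=xv_{1}\in I$: the element $ay$ sits in a unique block $J\in\mathrsfs{D}$, and since $J$ is a left ideal, $\Sigma^{*}ay=I(x,y)\subseteq J$. Uniqueness of $J$, which comes from $\mathrsfs{D}$ being a partition of $I$, makes $\varphi(I(x,y))=J$ well defined with the required containment, so $\varphi$ is an arrow in $\textbf{RLD}_{\Sigma}$. For surjectivity I would pick any $J\in\mathrsfs{D}$, take $v\in J$, and set $(x_{0},y_{0})=\sigma(v)$; then $v\in I(x_{0},y_{0})\cap J$ while $I(x_{0},y_{0})\subseteq\varphi(I(x_{0},y_{0}))$, so both $J$ and $\varphi(I(x_{0},y_{0}))$ are blocks of the partition $\mathrsfs{D}$ sharing the element $v$, forcing them to coincide.
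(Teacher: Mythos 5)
Your proof is correct and takes a genuinely different, and in one respect cleaner, route on the crucial reset condition (ii). The paper verifies (ii) \emph{indirectly}: it first constructs the epimorphism $\varphi:\mathrsfs{T}(I)\to\mathcal{D}$ to an arbitrary reset left decomposition $\mathcal{D}$ and then pulls the hypothesis $Iu\subseteq I(x,y)$ through $\varphi$ to conclude $Ju\subseteq J'$ for all $J\in\mathcal{D}$, finishing by invoking property (ii) for $\mathcal{D}$. That argument therefore presupposes that \emph{some} reset left decomposition $\mathcal{D}$ of $I$ already exists -- which, for a general (non-regular) ideal, is precisely what the theorem is meant to establish. Your direct argument avoids that dependency: assuming $Iu\subseteq I(x,y)$ with $u\notin I$, you observe that $u$ is a suffix of $wu$ lying outside $I$, so $\tau(wu)=y$ must absorb it, giving $y=pu$ for a fixed $p$ and hence forcing every $w\in I$ to end in the fixed nonempty word $ap$; choosing $m\in M$ and a letter $c$ different from the last letter of $ap$ produces $mc\in I$ violating this, using $|\Sigma|\ge 2$. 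This is self-contained and combinatorial. The remaining ingredients -- that the blocks partition $I$ into principal left ideals $\Sigma^{*}ay$ (you correctly fill in the ``$\supseteq$'' direction, which the paper asserts without proof), closure under the right action for condition (i), and the well-definedness and surjectivity of $\varphi$ via $ay\in I$ and the partition property of $\mathcal{D}$ -- match the paper's argument in substance.
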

\begin{proof}
We have already remarked that $\mathrsfs{T}(I)$ is a family of left ideals decomposing $I$. Moreover, for any $a\in\Sigma$, and $(bx,y)\in T$, for some $b\in\Sigma$, $I(ax,y)a\subseteq I(x',y')$ where $(x',y')=\sigma(bya)$. Before proving the reset condition ii) of Definition \ref{defn: regular dec}, we show the existence of the epimorphism $\varphi:\mathrsfs{T}(I)\to \mathrsfs{D}$. Take an arbitrary $I(az,y)$, for some $a\in\Sigma$ $z\in\Sigma^{*}$, since $I(az,y)=\Sigma^{*}ay$ and $ay\in M\Sigma^{*}$, then $ay$ belongs to some $J\in\mathcal{D}$. Hence, being $J$ a left ideal, we get that $I(az,y)=\Sigma^{*}ay\subseteq J$. This shows that the map $\varphi$ defined in the statement of the lemma is well defined. The surjectivity follows from the fact that for any $J$ and any $u\in J$, if $\sigma(u)=(x,y)$, then $\varphi(I(x,y))=J$. The fact that $\varphi$ is an homomorphism follows from $\varphi(I(x,y)a)\subseteq Ja\subseteq J'$, for some $J'\in\mathcal{D}$, whence $\varphi(I(x,y)a)=\varphi(I(x,y))a$. We are now in position to prove the reset condition. Indeed, if $Iu\subseteq I(x,y)$, for some $(x,y)\in T$, then we have 
$$
J'=\varphi(I(x,y))=\varphi(I(z,t)u)=\varphi(I(z,t))u
$$ 
for some $J'\in \mathcal{D}$ and for all $(z,t)\in T$. This last condition implies $J u\subseteq J'$ for all $J\in\mathcal{D}$, hence $u\in I$ since $\mathrsfs{D}$ is a reset left decomposition. 
\end{proof}
From Theorem \ref{theo: general characterization} we may consider the semiautomaton $\mathrsfs{L}(I)=\mathrsfs{A}(\mathrsfs{T}(I))$ associated to the reset left decomposition $\mathrsfs{T}(I)$, called the \emph{maximal lifted strongly connected automaton}. This name is justified by the following corollary.
\begin{cor}
Let $I$ be an ideal, and let $\mathrsfs{L}(I)$ be the maximal lifted strongly connected automaton. Then, any other strongly connected synchronizing automaton $ \mathrsfs{A}$, with $\Syn(\mathrsfs{A})=I$, is an homomorphic image of $\mathrsfs{L}(I)$, i.e., there is an epimorphism $\varphi: \mathrsfs{L}(I)\to \mathrsfs{A}$. Moreover, this automaton is unique (up to automaton-isomorphisms) in the following sense: if $\psi:\mathrsfs{A}\to \mathrsfs{L}(I)$ is an epimorphism, then $\psi$ is an isomorphism with inverse $\varphi$. 
\end{cor}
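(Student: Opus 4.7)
The plan is to push the universal property already established for $\mathrsfs{T}(I)$ in Theorem~\ref{theo: maximal lifted} across the equivalence of categories of Theorem~\ref{theo: general characterization}: what is universal on the combinatorial side of reset left decompositions becomes universal on the automaton side of strongly connected synchronizing semiautomata.

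For the existence of $\varphi$, I would start from a strongly connected synchronizing automaton $\mathrsfs{A}$ with $\Syn(\mathrsfs{A})=I$ and apply the functor $\mathcal{D}$ to produce a reset left decomposition $\mathcal{D}(\mathrsfs{A})$ of the same ideal $I$. Theorem~\ref{theo: maximal lifted} then supplies an epimorphism $\hat{\varphi}:\mathrsfs{T}(I)\to \mathcal{D}(\mathrsfs{A})$ in $\textbf{RLD}_{\Sigma}$, and applying the functor $\mathcal{A}$ together with the natural isomorphism $\mathcal{A}(\mathcal{D}(\mathrsfs{A}))\cong \mathrsfs{A}$ coming from the equivalence yields the desired automaton epimorphism $\varphi:\mathrsfs{L}(I)=\mathcal{A}(\mathrsfs{T}(I))\to \mathrsfs{A}$.

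For the uniqueness clause, suppose an epimorphism $\psi:\mathrsfs{A}\to\mathrsfs{L}(I)$ is given. Applying $\mathcal{D}$ produces an arrow $\hat{\psi}:\mathcal{D}(\mathrsfs{A})\to \mathcal{D}(\mathrsfs{L}(I))\cong \mathrsfs{T}(I)$ in $\textbf{RLD}_{\Sigma}$. The core observation will be purely set-theoretic: by the defining inequality of an arrow in $\textbf{RLD}_{\Sigma}$, every block $J\in\mathcal{D}(\mathrsfs{A})$ satisfies $J\subseteq \hat{\psi}(J)\subseteq \hat{\varphi}(\hat{\psi}(J))$, and $\hat{\varphi}(\hat{\psi}(J))$ again lies in the partition $\mathcal{D}(\mathrsfs{A})$; since the blocks are non-empty and pairwise disjoint, this forces $\hat{\varphi}\circ\hat{\psi}=\mathrm{id}_{\mathcal{D}(\mathrsfs{A})}$. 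The symmetric inclusion chain starting from a block $I(x,y)\in\mathrsfs{T}(I)$ gives $\hat{\psi}\circ\hat{\varphi}=\mathrm{id}_{\mathrsfs{T}(I)}$. Translating back through $\mathcal{A}$ yields the two-sided identities $\varphi\circ\psi=\mathrm{id}_{\mathrsfs{A}}$ and $\psi\circ\varphi=\mathrm{id}_{\mathrsfs{L}(I)}$.

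The main (minor) obstacle I foresee is purely bookkeeping around the equivalence: the statement asserts that the specific $\varphi$ produced above is literally the inverse of $\psi$, not merely that $\mathrsfs{A}\cong\mathrsfs{L}(I)$. Guaranteeing this amounts to identifying the canonical isomorphism $\mathcal{A}\mathcal{D}(\mathrsfs{A})\cong\mathrsfs{A}$ with the obvious relabelling $J_q\leftrightarrow q$ built into the definition of $\mathcal{D}$; once that identification is made, the identity-on-partitions equalities above transport verbatim to identity-on-states equalities, so no combinatorial input beyond Theorem~\ref{theo: maximal lifted} is actually needed.
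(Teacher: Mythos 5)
Your proposal is correct and follows essentially the same route as the paper: the existence part is the immediate transport of Theorem~\ref{theo: maximal lifted} through the equivalence of Theorem~\ref{theo: general characterization}, and the uniqueness argument is the same disjoint-blocks argument phrased on the decomposition side (the paper writes it element-wise as $\varphi(\psi(q))=q$ using $I_q\subseteq I(x,y)\subseteq I_p$ and concluding $p=q$ from disjointness, which is precisely your inclusion chain $J\subseteq\hat\psi(J)\subseteq\hat\varphi(\hat\psi(J))$). The only cosmetic difference is that you verify both composites $\hat\varphi\circ\hat\psi$ and $\hat\psi\circ\hat\varphi$, whereas the paper only establishes the left inverse $\varphi\circ\psi=\mathrm{id}$ and then invokes that $\psi$ is already surjective (being an arrow between strongly connected automata) to obtain bijectivity; both are complete.
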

\begin{proof}
The first statement is an immediate consequence of Theorem \ref{theo: maximal lifted} and Theorem \ref{theo: general characterization}.
By Theorem \ref{theo: general characterization} we may identify $\mathrsfs{A}$ with its reset left decomposition $\mathcal{I}$. Thus any state $q$ of $\mathrsfs{A}=\la Q,\Sigma, \delta\ra$ may be identified with the left ideal 
$$
I_{q}=\{u\in\Sigma^{*}: Q\cdot u=q\}
$$
We claim that $\varphi(\psi(q))=q$. By Theorem \ref{theo: general characterization} state $\psi(q)$ of $\mathrsfs{L}(I)$ may be identified with a left ideal $I(x,y)$ for suitable words $x,y \in\Sigma^{*}$. Now, by definition of the map $\varphi$, $\varphi(\psi(q))$ is a state $p$ corresponding to a left ideal $I_{p}$ of $\mathcal{I}$ with the property $I(x,y)\subseteq I_{p}$. Since $\psi$ is an epimorphism, then all the reset words that sends all the states of $\mathrsfs{A}$ in $q$ are also sending all the states of $\mathcal{L}(I)$ to $\psi(q)$, whence by Theorem \ref{theo: general characterization} we get $I_{q}\subseteq I(x,y)$. Hence, $I_{q}\subseteq I(x,y)\subseteq I_{p}$ that implies $I_{p}=I_{q}$ by definition of reset left decomposition. Thus, by Theorem \ref{theo: general characterization} we get $q=p$. Therefore, $\psi$ is injective and so it is an isomorphism with inverse $\varphi$.
\end{proof}
This corollary answers the question posed in Problem 4 \cite{ReRo16} regarding the existence of arbitrarily large strongly connected synchronizing automaton having an ideal $I$ as the set of its reset words. Even if we consider a principal ideal $P=\Sigma^{*}aw\Sigma^{*}$, for some $aw\in\Sigma^{*}$, the reset left decomposition $\mathrsfs{T}(P)$ is formed by left ideals $I(aw,y)$ where $y$ run on the set $(w\Sigma^{*}\setminus P)$, that is clearly infinite.

\subsection{A construction of a strongly connected synchronizing automaton with a given set of reset words}
Using the tail structure introduced in the previous section, we now show a more natural way to decompose a regular ideal into a reset left \textbf{regular} decomposition then the one devised in \cite{ReRo13, ReRo16}. 
Since $M=\mathcal{M}(I)$ is regular, we may consider the minimal DFA $\mathrsfs{B}=\la X,\Sigma, \delta, q_{0}, F\ra$ recognizing $M$. Note that, in general, $\mathrsfs{B}$ is not complete, and in case a state $q$ is not defined on a letter $a\in\Sigma$ we say that $q\cdot a$ is undefined. Note that $F$ is formed by a unique state. Indeed, using the definition of the set of minimal reset words $M$, it is not difficult to check that for any state of $f\in F$, $f\cdot a$ is undefined for every $a\in\Sigma$. Thus, by minimality we deduce that $|F|=1$ and we denote this unique final state by $f$.
The \emph{set of visiting states} associated to a word $u\in\Sigma^{*}$ with $|u|=n$ is the subset of $X$ defined by 
$$
\nu(u)=\{q_{0}\cdot u[i:], i=0,\ldots, n\}
$$
Note that this set corresponds to the states of the DFA obtained by the powerset construction from the NFA formed by $\mathrsfs{B}$ by adding the transitions $q_{0}\mapright{a} q_{0}$, $a\in\Sigma$. This NFA recognizes the language $\Sigma^{*}M$. The set of visiting states has the following property.
\begin{lemma}\label{lem: unique final state}
With the above notation, for any $u\in \Sigma^{*}$ let $\sigma(u)=(ax,y)$, for some $a\in(\Sigma\cup\{\varepsilon\})$, $x,y\in\Sigma^{*}$. If $f\in \nu(u)$, then $f=q_{0}\cdot \lambda(u) $, and this occurs if and only if $y=x$, i.e., $u\in \Sigma^{*}M$. In general, we have:
$$
\nu(u)\setminus F=\nu(\tau(u))
$$
\end{lemma}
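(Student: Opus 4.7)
The plan is to decompose the set of suffixes of $u$ indexing $\nu(u)$ into two zones, one strictly before the last letter of $\lambda(u)$ and one containing-or-after it, and to argue that the first zone contributes at most $\{f\}$ to $\nu(u)$ while the second zone yields exactly $\nu(\tau(u))$.

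The essential preliminary is the following characterisation of $\mathrsfs{B}$: for every $w\in\Sigma^{*}$, the state $q_{0}\cdot w$ is defined if and only if $w\in\Pref(M)$, and $q_{0}\cdot w=f$ if and only if $w\in M$. The first equivalence holds because a minimal partial DFA has no dead state, so only prefixes of accepted words are readable. The second uses that $f$ has no outgoing transitions together with the factor-freeness of $M$, which, being stronger than prefix-freeness, ensures that no word of $M$ is a proper prefix of another; hence reaching $f$ on a proper prefix of $w$ blocks continuation.

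Now write $u=u'\lambda(u)v$ with $\lambda(u)=ax$, so $\tau(u)=xv=y$, and $y=x$ is equivalent to $v=\varepsilon$, which in turn is equivalent to $\lambda(u)$ being a suffix of $u$, i.e.\ $u\in\Sigma^{*}M$. The suffixes of $u$ of length at most $|\tau(u)|$ are precisely the suffixes of $\tau(u)$, so they contribute exactly $\nu(\tau(u))$ to $\nu(u)$. Any longer suffix $w$ of $u$ contains $\lambda(u)$ as a factor and satisfies $|w|\ge|\lambda(u)|$, with equality iff $v=\varepsilon$ and $w=\lambda(u)$. Otherwise $|w|>|\lambda(u)|$, and $w$ cannot be a prefix of any $m\in M$: such an $m$ would contain $\lambda(u)$ as a factor, forcing $m=\lambda(u)$ by factor-freeness and contradicting $|w|>|\lambda(u)|$. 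Hence $q_{0}\cdot w$ is undefined in every case except the exceptional one $w=\lambda(u)$ (requiring $v=\varepsilon$), where $q_{0}\cdot w=f$.

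Combining the two zones yields $\nu(u)=\nu(\tau(u))\cup(\{f\}\cap\nu(u))$. Because $\tau(u)$ contains no factor in $M$, no suffix of $\tau(u)$ lies in $M$, so $f\notin\nu(\tau(u))$, and $\nu(u)\setminus F=\nu(\tau(u))$ follows. The first assertion then reads off the exceptional case: $f\in\nu(u)$ iff $v=\varepsilon$, iff $y=x$, iff $u\in\Sigma^{*}M$, and in that case $f=q_{0}\cdot\lambda(u)$. The only subtle step is pinning down the preliminary characterisation of $\mathrsfs{B}$, which relies on the partiality convention for the minimal DFA and on the prefix-freeness of $M$; once this is in hand, the rest is straightforward suffix bookkeeping driven by factor-freeness.
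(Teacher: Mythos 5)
Your proof is correct and takes essentially the same route as the paper: both decompose the suffixes of $u$ into those of length at most $|\tau(u)|$ (which give exactly $\nu(\tau(u))$) and those that are longer (which either hit $f$ at the unique suffix $\lambda(u)$ when $v=\varepsilon$, or are undefined because they cannot be prefixes of any word of $M$ by factor-freeness). The only cosmetic difference is that you isolate the trim-partial-DFA characterisation ($q_0\cdot w$ defined iff $w\in\Pref(M)$, $q_0\cdot w=f$ iff $w\in M$) as an explicit preliminary, whereas the paper uses it implicitly via its description of $\mathrsfs{B}$.
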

\begin{proof}
Let $\sigma(u)=(ax,y)$ so that $u=way$, for some $w\in\Sigma^{*}$. Note that $q_{0}\cdot u[i:]$ is undefined for all the suffixes $u[i:]=w'ay$, for some $w'\in\Sigma^{+}$. Indeed, this follows from $w'ay\Sigma^{*}\cap M=\emptyset$, and by definition of the set of minimal reset words $M$ and the definition of the minimal DFA $\mathrsfs{B}$ recognizing $M$. Therefore, if $k$ is the integer such that $u[k:]=ay$, then the states forming $\nu(u)$ are 
$$
q_{0}\cdot u[j:], \,j\ge k
$$
Hence $f\in \nu(u)$ if and only if $u[k:]=ax\in M$, and in this case 
$$
f=q_{0}\cdot ax=q_{0}\cdot \lambda(u)
$$
and this occurs if and only if $y=x$.
In case $u[k:]\in ax\Sigma^{+}$ we have that $q_{0}\cdot u[k:]$ is undefined, and $q_{0}\cdot u[j:]\neq f$ for all $j\ge k$. Hence, in both the two cases $u[k:]= ax$ and $u[k:]\in ax\Sigma^{+}$, we have:
$$
\nu(u)\setminus F= \{q_{0}\cdot u[j:],\,j<k\}= \nu(\tau(u))
$$
\end{proof}
Note that, by the above remark, there is a natural action of $\Sigma$ on the set of visiting states. If $\nu(u)=\{p_{1}, \ldots, p_{k}\}$ then for any $a\in\Sigma$ we define:
$$
\nu(u)\circ a=\nu(ua)=\{p_{1}\cdot a, \ldots, p_{k}\cdot a,q_{0}\}
$$
Now, we need to add extra information to $\nu(u)$ to keep trace of the tail structure and the way it evolves. This is crucial at a certain point to prove the reset condition ii) of Definition \ref{defn: regular dec}. Let $m=\|I\|$, and let $\mathbb{Z}_{m}$ be the ring of the integers module $m$. Let $Z_{m}[\Sigma]$ be the free module over the ring $\mathbb{Z}_{m}$ generated by $\Sigma$. For a word $u\in\Sigma$ the \emph{trace} of $u$ is the element of $Z_{m}[\Sigma]$ defined by:
$$
\tr(u)=\sum_{a\in\Sigma}p_{a}a
$$
where $p_{a}\in\mathbb{Z}_{m}$ is the number of occurrences of the letter $a$ in $u$ module $m$. For example, $\tr(a^{3}bac^{4}a)=b+a$ for $m=4$ since $p_{c}=0\mod 4$ and $p_{a}=p_{b}=1\mod 4$.

Consider the alphabet $A=\Sigma\times \mathbb{Z}_{m}[\Sigma]\times X$, where we recall that $X$ is the set of states of $\mathrsfs{B}$. For a generic word $u\in \Sigma^{*}$ with $\sigma(u)=(ax,y)$, for some $a\in(\Sigma\cup\{\varepsilon\})$ and $x,y\in\Sigma^{*}$, if $v=ay$ we consider the subset $\omega(u)\in 2^{A}$ defined by
$$
\omega(u)=\left\{\left(v[i], \tr(v[i:]), q_{0}\cdot v[i:]\right), i=0,\ldots, |v|\right\}
$$
with the convention that if $q_{0}\cdot v[i:]$ is not defined, then the corresponding element $(y[i], \tr(y[i:]), q_{0}\cdot y[i:])$ is set to be empty. Note that the projection of $\omega(u)$ into the last component of the alphabet $A$ is exactly $\nu(u)$. Roughly speaking $\omega(u)$ encodes the information of the tail structure of $u$ by keeping trace of the first letter of a suffix of the tail $\tau(u)$, the number of occurrences of the letters occurring in that suffix and the state reached by applying this suffix. Note that $\omega(u)$ contains at least the element $(\varepsilon, 0, q_{0})$. If $f=\nu(u)\cap F\neq \emptyset$, which corresponds by Lemma \ref{lem: unique final state} to the case $u\in \Sigma^{*}M$, we have in $\omega(u)$ a unique element containing the final state of the form $(a,x, f)$, that we denote by $\varphi(\omega(u))$. Note that by the same Lemma \ref{lem: unique final state} it is not difficult to see that
\begin{equation}\label{eq: final state}
\omega(u)\setminus (a,x,f)=\omega(\tau(u))
\end{equation}
that corresponds exactly to the case $u\in \Sigma^{*}M$, while in case $u\notin\Sigma^{*}M$ we get:
$$
\omega(u)=\omega(\tau(u))
$$
We consider the set $\Omega(I)=\{\omega(u): u\in \Sigma^{*}\}$. Note that if $m=\|I\|$, $k=|\Sigma|$, $n=|X|$, a rough upper bound on the cardinality of $\Omega$ is:
$$
|\Omega(I)|\le 2^{km^{k}n}
$$
The following lemma is a consequence of the definitions.
\begin{lemma}\label{lem: equality of omega}
There is an action of $\Sigma$ on $\Omega(I)$ defined by: for all $\omega(u)\in \Omega(I)$ and $a\in\Sigma$:
$$
\omega(u)\circ a=\omega(ua)
$$
More explicitly, we have
$$
\omega(u)\circ a=\left\{(c,s+a, p\cdot a): (c,s,p)\in \omega(u)\setminus \{(\varepsilon, 0, q_{0})\}\right\} \cup\{(a,a, q_{0}\cdot a), (\varepsilon, 0, q_{0})\}
$$
In particular, if $\omega(u)=\omega(v)$, then $\omega(u)\circ a=\omega(v)\circ a$ for all $a\in\Sigma$.
\end{lemma}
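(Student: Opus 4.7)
The plan is to prove the explicit formula
$\omega(ua) = \{(c, s+a, p\cdot a) : (c,s,p) \in \omega(u)\setminus\{(\varepsilon,0,q_0)\}\} \cup \{(a,a,q_0\cdot a),(\varepsilon,0,q_0)\}$
directly from the definition of $\omega$. Once this is established, the ``in particular'' statement is immediate: since the right-hand side depends only on the set $\omega(u)$ and on the letter $a$, if $\omega(u)=\omega(v)$ then $\omega(u)\circ a = \omega(v)\circ a$, so the operation $\omega\circ a := \omega(u)\circ a$ is well-defined on $\Omega(I)$ and therefore defines an action of $\Sigma$.

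As a preliminary I would rephrase the definition of $\omega(u)$ in a cleaner, suffix-intrinsic form: up to the always-present element $(\varepsilon,0,q_0)$, the set $\omega(u)$ consists exactly of the triples $(w[0],\tr(w),q_0\cdot w)$ where $w$ ranges over the non-empty suffixes of $u$ for which $q_0\cdot w$ is defined in $\mathrsfs{B}$. The original definition uses only suffixes of length $0,1,\dots,|v|$ with $v=a'\tau(u)$, but any suffix $w$ of $u$ strictly longer than $v$ would contain $\lambda(u)$ as a proper factor, and since $M$ is factor-free no such $w$ can be a prefix of a word in $M$; hence $q_0\cdot w$ is undefined for those longer suffixes and the restriction $i\le |v|$ is vacuous. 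This equivalent description also uniformly covers the degenerate cases $u\notin I$ (where $v=u$) and $u\in\Sigma^{*}M$.

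With this description the computation of $\omega(ua)$ is routine. The non-empty suffixes of $ua$ are exactly $a$ and the words $wa$ where $w$ is a non-empty suffix of $u$. The suffix $a$ contributes the triple $(a,a,q_0\cdot a)$, dropped if $q_0\cdot a$ is undefined. A suffix $wa$ has first letter $w[0]$, trace $\tr(w)+a$ and state $(q_0\cdot w)\cdot a$; the corresponding triple is non-empty precisely when $(q_0\cdot w)\cdot a$ is defined, which in particular forces $q_0\cdot w$ to be defined, so $wa$ contributes exactly the image of $(w[0],\tr(w),q_0\cdot w)\in\omega(u)\setminus\{(\varepsilon,0,q_0)\}$ under the shift $(c,s,p)\mapsto(c,s+a,p\cdot a)$. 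Adding the unavoidable triple $(\varepsilon,0,q_0)$ produces the claimed explicit formula. The main obstacle is the initial rephrasing of $\omega(u)$: without it one is pushed into a four-way case analysis depending on whether $u$ and $ua$ lie in $I$ and on whether appending $a$ creates a new last factor, each case requiring a separate matching of the index range of $i$ between $u$ and $ua$. The factor-freeness of $M$ together with the convention that undefined third components empty out the triple absorbs all of this bookkeeping into a single uniform statement.
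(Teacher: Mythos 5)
Your proof is correct. The paper itself gives no proof — it merely states that the lemma ``is a consequence of the definitions'' — so there is no author argument to compare against; your write-up is a legitimate filling-in of that gap. The one genuinely non-obvious step is the preliminary reformulation of $\omega(u)$ as ranging over \emph{all} non-empty suffixes $w$ of $u$ with $q_0\cdot w$ defined, and you correctly identify why this works: any suffix of $u$ strictly longer than $a\tau(u)$ contains $\lambda(u)\in M$ as a proper factor, so by factor-freeness of $M$ it cannot be a prefix of any word of $M$, hence $q_0\cdot w$ is undefined in the trim DFA $\mathrsfs{B}$. This is exactly what makes the transition $u\mapsto ua$ a uniform suffix-shift rather than a four-way case split on membership in $I$ and $\Sigma^{*}M$, and it is the same factor-freeness that Lemma~\ref{lem: unique final state} already exploits. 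Two small points you handle implicitly but correctly: the triple $(a,a,q_0\cdot a)$ in the explicit formula is to be read under the paper's convention that an undefined third component empties the triple (so it may be absent if no word of $M$ begins with $a$); and the well-definedness of the action follows because your explicit formula depends only on the set $\omega(u)$ and the letter $a$, never on the representative $u$.
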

%\begin{proof}
%First note that $(\varepsilon, 0, q_{0})\in\omega(u)$. Now, if $(c,s,p)\in \omega(u)\setminus \{(\varepsilon, 0, q_{0})\}$, then $(c,s+a, p\cdot a)\in\omega(ua)$, for any $a\in\Sigma$ the following equality
%$$
%\omega(ua)=\left\{(c,s+a, p\cdot a): (c,s,p)\in \omega(u)\setminus \{(\varepsilon, 0, q_{0})\}\right\} \cup\{(a,a, q_{0}\cdot a), (\varepsilon, 0, q_{0})\}
%$$
%holds, from which the statement of the lemma follows. 
%\end{proof}
We now define the \emph{tail action}.
\begin{defn}[Tail action]
Consider the action of the alphabet $\Sigma$ on the set 
$$
T(I)=\Sigma\times \mathbb{Z}_{m}[\Sigma]\times\Omega(I)
$$
defined in the following way: for all $a\in\Sigma$ and $(b,x,\omega(u))\in T(I)$ we put:
\begin{align*}
&(b,x,\omega(u))\cdot a=\\
&=\begin{cases}
(b,x+a,\omega(u)\circ a), \mbox{ if }ua\notin \Sigma^{*}M\\
\left(c,s,\omega(\tau(ua))\right), \mbox{ if }ua\in \Sigma^{*}M\mbox{ and }\varphi(\omega(ua))=(c,s,f)
\end{cases}
\end{align*}
\end{defn}
For any $(b,x,\omega(u))\in T(I)$ we define the following set:
$$
I(b,x,\omega(u))=\{v\in I: \sigma(v)=(bw,t), x=\tr(bt), \omega(u)=\omega(v)\}
$$
We have the following lemma.
\begin{lemma}\label{lem: action of left ideals}
For any $a\in\Sigma$ we have:
$$
I(b,x,\omega(u))a\subseteq I\left((b,x,\omega(u))\cdot a\right)
$$
\end{lemma}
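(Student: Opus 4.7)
The plan is to verify the inclusion element-wise, following the case split in the definition of the tail action. Fix $v\in I(b,x,\omega(u))$, so $v\in I$, $\sigma(v)=(bw,t)$, $\tr(bt)=x$ and $\omega(v)=\omega(u)$. The crucial first move is to invoke Lemma~\ref{lem: equality of omega}: since $\omega(v)=\omega(u)$, it yields
\[
\omega(va)=\omega(v)\circ a=\omega(u)\circ a=\omega(ua).
\]
Projecting onto the last coordinate recovers the visiting states, so $\nu(va)=\nu(ua)$, and Lemma~\ref{lem: unique final state} then gives $va\in\Sigma^{*}M$ if and only if $ua\in\Sigma^{*}M$. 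This synchronizes the case split on $ua$ with the behavior of $va$, which is what allows the two clauses of the tail action to match.

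Suppose first $ua\notin\Sigma^{*}M$, so also $va\notin\Sigma^{*}M$. The claim is $\lambda(va)=\lambda(v)=bw$ and $\tau(va)=ta$. Any occurrence of some $m'\in M$ in $va$ ending at the last letter $a$ would force $va\in\Sigma^{*}M$, contrary to assumption; any occurrence ending strictly before the last letter lies entirely inside $v$, and by the tail condition defining $\tau(v)$ it must start no later than the position of $\lambda(v)$. Hence $\lambda(v)$ is the latest-starting valid last factor of $va$, giving $\sigma(va)=(bw,ta)$ and $\tr(b\cdot ta)=\tr(bt)+a=x+a$. Combined with $\omega(va)=\omega(u)\circ a$ from the first step, this says exactly $va\in I(b,x+a,\omega(u)\circ a)$, which is the first clause of the tail action.

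Now suppose $ua\in\Sigma^{*}M$, so $va\in\Sigma^{*}M$ and $va$ ends with some $m\in M$. The decisive use of factor-freeness of $M$ is the claim $\lambda(va)=m$. Any occurrence in $va$ of some $m'\in M$ starting strictly later than $|va|-|m|$ lies inside the suffix $m$ and is therefore a proper factor of $m$, ruled out by factor-freeness; on the other hand, $\lambda(v)=bw$ itself cannot survive as last factor of $va$, since otherwise $m$ (which has length at least that of the tail of $v$ plus one) would sit inside the new tail $ta$ and make $bw$ a proper factor of $m$, again contradicting factor-freeness. Thus $\sigma(va)=(m,m[1:])$, and by the definition of $\omega$ the $i=0$ element of $\omega(va)$ is $(m[0],\tr(m),q_{0}\cdot m)=(m[0],\tr(m),f)$, whence $\varphi(\omega(va))=(m[0],\tr(m),f)$. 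Since $\omega(va)=\omega(ua)$, we identify $(m[0],\tr(m),f)=\varphi(\omega(ua))=(c,s,f)$, so $m[0]=c$ and $\tr(c\cdot m[1:])=\tr(m)=s$. Finally, applying equation~(\ref{eq: final state}) to both $ua$ and $va$ yields
\[
\omega(\tau(va))=\omega(va)\setminus\{(c,s,f)\}=\omega(ua)\setminus\{(c,s,f)\}=\omega(\tau(ua)),
\]
which matches the third component. Hence $va$ satisfies all the conditions defining $I(c,s,\omega(\tau(ua)))$, the second clause of the action.

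The main obstacle is Case~2: factor-freeness of $M$ must be used twice, first to kill any occurrence of $M$ properly inside $m$ and second to prevent $\lambda(v)$ from persisting as the last factor (which is what pins $\lambda(va)$ exactly to the trailing copy of $m$), and then the third coordinate $\omega(\tau(ua))$ has to be recovered from $\omega(va)=\omega(ua)$ through equation~(\ref{eq: final state}). By contrast, Case~1 is essentially mechanical once $\lambda(va)=\lambda(v)$ is established.
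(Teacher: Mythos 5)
Your proof is correct and follows essentially the same approach as the paper: Lemma~\ref{lem: equality of omega} gives $\omega(va)=\omega(ua)$, Lemma~\ref{lem: unique final state} synchronizes membership in $\Sigma^{*}M$, and equation~(\ref{eq: final state}) recovers the third coordinate in the second case; you in fact supply factor-freeness arguments pinning down $\sigma(va)$ that the paper only asserts. One notational caution: the paper defines $u[i:]$ as the \emph{suffix of length $i$}, so your $m[1:]$ and $m[0]$ (Python-style slicing for ``$m$ without its first letter'' and ``the first letter of $m$'') should be rewritten in the paper's conventions to avoid confusion.
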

\begin{proof}
Let $v\in I(b,x,\omega(u))$ with $\sigma(v)=(bw,t)$. We have to show that $va\in I\left((b,x,\omega(u))\cdot a\right)$. We consider the following two cases.
\begin{itemize}
\item $ua\in\Sigma^{*}M$. Since $\omega(u)=\omega(v)$ by Lemma \ref{lem: equality of omega} we get $\omega(ua)=\omega(va)$. Moreover, since $ua\in\Sigma^{*}M$, by Lemma \ref{lem: unique final state}, we get $f\in\nu(ua)$, and there is a unique element $\varphi(\omega(ua))=(c,s,f)\in \omega(ua)=\omega(va)$. Hence, $va\in \Sigma^{*}M$. By Lemma \ref{lem: unique final state} and the definition of $\omega(ua)=\omega(va)$ we conclude that $\sigma(va)=(cz,z)$ with $s=\tr(cz)$ and $\sigma(ua)=(ch,h)$ with $s=\tr(ch)$. By definition of the action on $T(I)$ and equality (\ref{eq: final state}) we have
$$
(b,x,\omega(u))\cdot a=(c,s,\omega(\tau(ua)))=\left(c,s,\omega(ua)\setminus \{(c,s,f)\}\right)
$$
Therefore, since $\omega(va)\setminus \{(c,s,f)\}=\omega(ua)\setminus \{(c,s,f)\}$ and $va\in\Sigma^{*}M$ we may conclude by equality (\ref{eq: final state}) that 
$$
\omega(\tau(va))=\omega(va)\setminus \{(c,s,f)\}=\omega(ua)\setminus \{(c,s,f)\}=\omega(\tau(ua))
$$
holds. Hence, we get our claim 
$$
va\in I\left(c,s,\omega(ua)\right)=I((b,x,\omega(u))\cdot a).
$$
\item Suppose $ua\notin\Sigma^{*}M$. Since $\omega(u)=\omega(v)$ by Lemma \ref{lem: equality of omega} we get $\omega(ua)=\omega(u)\circ a=\omega(v)\circ a=\omega(va)$. In particular, we have $va\notin \Sigma^{*}M$. Since $v\in I(b,x,\omega(u))$ we have $\sigma(v)=(bz,y)$, $\sigma(va)=(bz,ya)$ and $x=\tr(by)$. 
%and $\sigma(ua)=(bh,t)$ with $x=\tr(bt)$. 
Thus, by the definition of the action on $\mathcal{T}(I)$ we have:
$$
va\in I(b,\tr(bya),\omega(va))= I(b,x+a,\omega(va))=I(b,x+a,\omega(ua))=I\left((b,x,\omega(u))\cdot a\right)
$$
\end{itemize}
\end{proof}
We have the following theorem.
\begin{theorem}\label{theo: new reset decomposition}
Given a regular ideal $I$ on an alphabet with $|\Sigma|>1$, the finite family 
$$
\mathcal{F}=\left\{I(b,s,\omega(u)): \omega(u)\in \Omega(I), s\in\mathbb{Z}_{m}[\Sigma], b\in\Sigma\right\}
$$
is a reset left regular decomposition of $I$. 
\end{theorem}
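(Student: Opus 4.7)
My plan is to check, in turn, the four requirements of a reset left regular decomposition: finiteness of $\mathcal{F}$, the fact that its members form a partition of $I$ into left ideals, axiom (i) of Definition~\ref{defn: regular dec}, and axiom (ii). The first three are formal, while axiom (ii) is where the real content lies.

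Finiteness comes from the elementary bound $|\mathcal{F}|\le k\cdot m^{k}\cdot |\Omega(I)|\le km^{k}\cdot 2^{km^{k}n}$ already computed before Lemma~\ref{lem: equality of omega}. For the partition, the uniqueness of the factorization $u=u'\lambda(u)v$ (noted just after the definition of the tail structure) assigns to every $v\in I$ exactly one triple $(b,s,\omega(v))$ with $b=\lambda(v)[0]$ and $s=\tr(b\tau(v))$, so the blocks are pairwise disjoint and cover $I$. Each block is a left ideal because prepending any $c\in\Sigma^{*}$ to $v\in I$ leaves the rightmost $M$-factor of $v$ in place (it remains rightmost and it remains the same word), hence $\sigma(cv)=\sigma(v)$ and also $\tr$ and $\omega$ are preserved. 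Axiom (i) is exactly the content of Lemma~\ref{lem: action of left ideals}.

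The crux is axiom (ii): if $Iu\subseteq J$ for some $J=I(b_{0},s_{0},\omega(u_{0}))$, then $u\in I$. I argue by contrapositive. Assume $u\notin I$, so $u$ has no $M$-factor, and exhibit $v_{1},v_{2}\in I$ with $v_{1}u$ and $v_{2}u$ in different blocks of $\mathcal{F}$. Fix a minimal reset word $m_{0}\in M$ with $|m_{0}|=m$, set $v_{1}=m_{0}$, and take $v_{2}=m_{0}a$ for a letter $a\in\Sigma$ to be chosen. If $a$ is such that no ``extra'' $M$-factor appears in $m_{0}au$---i.e.\ no word of the form $m_{0}[i:]au[:j]$ with $0<i<|m_{0}|$ nor $au[:j]$ belongs to $M$---then both $v_{1}u$ and $v_{2}u$ have $\lambda=m_{0}$ with tails $m_{0}[1:]u$ and $m_{0}[1:]au$ respectively. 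Consequently their trace components differ by exactly $\tr(a)$, a single generator of $\mathbb{Z}_{m}[\Sigma]$ and hence nonzero, placing them in different blocks. This yields the required contradiction.

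The main obstacle is arguing that an admissible letter $a$ always exists. The obstruction is combinatorial: were every $a\in\Sigma$ blocked by some overlap $m_{0}[i_{a}:]au[:j_{a}]\in M$ or $au[:j_{a}]\in M$, the resulting collection of $M$-prefixes indexed by $a\in\Sigma$ would either clash with the factor-freeness of $M$ or would force an $M$-factor to appear strictly inside $u$, contradicting $u\notin I$. If necessary one first replaces $m_{0}$ by a longer word in $I$ (still a reset word, and the trace argument is unaffected) and reruns the construction; since $u$ has finite length, the procedure terminates with a valid pair $v_{1},v_{2}$.
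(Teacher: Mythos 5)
Your verification of finiteness, the partition property, the left-ideal property of each block, and axiom (i) (via Lemma~\ref{lem: action of left ideals}) matches the paper and is correct. The gap is in axiom (ii), which is indeed where the real content lies. You take $v_{1}=m_{0}$ and $v_{2}=m_{0}a$ and argue that the two images $v_{1}u$, $v_{2}u$ land in blocks whose traces differ by $\tr(a)$. For this you need $\lambda(m_{0}u)$ and $\lambda(m_{0}au)$ to both be the occurrence of $m_{0}$ at position $0$, equivalently $m_{0}[1{:}]u\notin I$ and $m_{0}[1{:}]au\notin I$. You concede this may fail and that an ``admissible'' $a$ must be found, but the justification offered --- that a blocker for every $a$ would ``clash with factor-freeness of $M$ or force an $M$-factor inside $u$'' --- does not hold up. A blocking word is of the form $m_{0}[i_{a}{:}]\,a\,u[{:}j_{a}]\in M$ with $i_{a}\ge 1$, and nothing prevents $M$ from containing such straddling words simultaneously for every $a\in\Sigma$; two of them with different middle letters $a\neq a'$ are never factors of one another, so factor-freeness is not contradicted, and none of them is a factor of $u$ alone. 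The fallback ``replace $m_{0}$ by a longer word in $I$ and rerun'' is also unsupported: enlarging the left word changes the set of potential straddling factors, and you give no invariant that shrinks, so there is no termination argument. Also note the degenerate case $m=\|I\|=1$, where $\mathbb{Z}_{m}[\Sigma]$ is trivial and $\tr(a)=0$, so ``a single generator hence nonzero'' is not automatic.

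The paper avoids this obstacle entirely. Instead of a one-letter perturbation, it fixes a shortest word $w\in I$, picks $c\in\Sigma$ with $c\neq b$ (where $b$ is the prescribed first letter of the last factor), and considers the whole family $wc^{k}v$ for $k\ge 1$. The hypothesis $Iv\subseteq I(b,s,\omega(z))$ forces $\lambda(wc^{k}v)$ to begin with $b$; since $v\notin I$ that initial $b$ cannot lie in $v$, and since $b\neq c$ it cannot lie in the $c^{k}$ block, so it lies in $w$. This pins down the tail $by_{k}$ as $z_{k}c^{k}v$ for a suffix $z_{k}$ of $w$, and then counting occurrences of $c$ modulo $m$ yields $s-n_{c}(v)\equiv \ell+n_{c}(z_{\ell})\pmod m$ for all $\ell$, which is impossible because the $z_{\ell}$ range over fewer than $m$ suffixes of $w$ while $\ell$ is unbounded. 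The crucial device you are missing is exactly this: vary the exponent $k$ over all residues while confining the freedom in $n_{c}(z_{k})$ to a proper subset of $\mathbb{Z}_{m}$, so that no single value $s$ can be consistent. Your one-shot comparison of two words does not create enough constraints to force a contradiction, whereas the unbounded family $wc^{k}v$ does.
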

\begin{proof}
Note that each element $u\in I$ has a uniquely determined tail structure $\sigma(u)=(bw,t)$, and so it gives rise to a unique triple $(b,\tr(bt),\omega(u))$. Hence, the family $\mathcal{F}$ forms a partition of $I$. Moreover, since $\sigma(vu)=\sigma(u)$ for all $v\in\Sigma^{*}$, $u\in I$, then every $I(b,s,\omega(u))$ is a left ideal, hence the above decomposition is formed by left ideals that by Lemma \ref{lem: action of left ideals} satisfies the condition that for any $a\in \Sigma$
$$
I(b,s,\omega(u))a\subseteq I\left((b,x,\omega(u))\cdot a\right)\in\mathcal{F}
$$
It remains to show the reset condition ii) of Definition \ref{defn: regular dec}, so let $v\in\Sigma^{*}$ such that
$$
Iv\subseteq I(b,s,\omega(z)), \mbox{ for some }I(b,s,\omega(z))\in\mathcal{F}
$$
We have to show that $v\in I$. This condition implies that for any $u\in I$ we have:
\begin{equation}\label{eq: condition reset}
\sigma(uv)=(bx,y), \mbox{ with }\tr(by)=s,\mbox{ and }\omega(uv)=\omega(z).
\end{equation}
Suppose, contrary to our claim, that $v\notin I$. Let $w\in I$ be an element of minimal length, i.e., $|w|=m=\|I\|$. Let $c\in\Sigma$ with $c\neq b$ (here it is important that $|\Sigma|>1$). Consider the series of words $wc^{k}\in I$. By the condition (\ref{eq: condition reset}) we have that $\sigma(wc^{k}v)=(bx_{k},y_{k})$ for all $k\ge 1$, and $by_{k}$ is not a suffix of $v$, for if $v\in I$, a contradiction. Thus, since $b\neq c$ we necessarily have that there are prefixes $w_{k}$ of $w$, $k\ge 1$, such that $w_{k}by_{k}=wc^{k}v$ (see the following figure).
\begin{center}
\begin{tikzpicture}
\begin{scope}[xshift=0]
\node [draw,rectangle,minimum width=2cm,minimum height=0.2cm, label=$w$] {};
\end{scope}
\begin{scope}[xshift=2cm]
\node [draw,rectangle,minimum width=2cm,minimum height=0.2cm, label=$c^{k}$] {};
\end{scope}

\begin{scope}[xshift=4.5cm]
\node [draw,rectangle,minimum width=3cm,minimum height=0.2cm, label=$v$] {};
\end{scope}

\begin{scope}[yshift=-1cm, xshift=-0.46cm]
\node [draw,rectangle,minimum width=1.0cm,minimum height=0.2cm, label=$w_{k}$] {};
\end{scope}

\begin{scope}[yshift=-1cm, xshift=0.2cm]
\node [draw,rectangle,minimum width=0.35cm,minimum height=0.2cm, label=$b$] {};
\end{scope}

\begin{scope}[yshift=-1cm, xshift=3.22cm]
\node [draw,rectangle,minimum width=5.66cm,minimum height=0.2cm, label=$y_{k}$] {};
\end{scope}

\end{tikzpicture}
\end{center}
Therefore, by condition (\ref{eq: condition reset}) we have that $s=\tr(by_{k})$, for any $k\ge 1$. Let us denote by $z_{k}$ the suffixes of $w$ such that $w=w_{k}z_{k}$, for all $k\ge 1$, and by $n_{c}(u)$ the number of occurrences of the letter $c$ in the word $u$ module $m$. By counting the number of occurrences of the letter $c$, and from the condition $s=\tr(by_{k})$, $k\ge 1$, we easily deduce that for any pair of integers $\ell, \ell'$ we have:
\begin{equation}\label{eq: module condition}
s-n_{c}(v)=\ell+n_{c}(z_{\ell})=\ell'+n_{c}(z_{\ell'}) \mod\,m
\end{equation}
Let $R=\{n_{c}(z_{\ell})-n_{c}(z_{\ell'}) \mod\,m, \mbox{ for all }\ell, \ell'\ge 0\}$. Since $w$ contains the letter $b$, it is not difficult to see that $|R|<m$. Take any $r\in \mathbb{Z}_{m}\setminus R$. Since the parameter $\ell$ is arbitrary, then we may find two integers $\ell_{1}, \ell_{2}$ satisfying $\ell_{1}-\ell_{2}=r\mod\, m$. Hence, the two integers $\ell_{1}, \ell_{2}$ contradict equation (\ref{eq: module condition}). Therefore, $v\in I$ and this concludes the proof of the theorem. 
\end{proof}
From the previous theorem we have the following corollary. 
\begin{cor}\label{cor: another reset regular decomposition}
Let $I$ be a regular ideal on a non-unary alphabet $\Sigma$ with $k=|\Sigma|$, $m=\|I\|$, and let $n$ be the state complexity of the language of the set of the minimal reset words $M=\mathcal{M}(I)$. Then, there is a strongly connected synchronizing automaton $\mathcal{T}(I)=\la Q,\Sigma, \delta\ra$ with at most $(km^{k})2^{km^{k}n}$ states such that $\Syn(\mathcal{T}(I))=I$. Moreover, the construction of $\mathcal{T}(I)$ is effective and if the ideal $I$ is presented by its set of generators $M$ via the minimal DFA $\mathrsfs{B}$ recognizing $M$, then Algorithm \ref{alg:nlognAlg} returns $\mathcal{T}(I)$ in time $\mathcal{O}((k^{2}m^{k})2^{km^{k}n})$.
\end{cor}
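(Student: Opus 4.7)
The plan is to derive this corollary as a direct package: combine Theorem \ref{theo: new reset decomposition}, the equivalence of categories of Theorem \ref{theo: general characterization}, a state-counting argument, and a breadth-first search implementation of the tail action. The conceptual content is already done, so the proof is essentially bookkeeping.

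First I would apply Theorem \ref{theo: new reset decomposition}, which gives that $\mathcal{F} = \{I(b,s,\omega(u))\}$ is a reset left regular decomposition of $I$, and then invoke the functor $\mathcal{A}$ from Theorem \ref{theo: general characterization} to obtain the strongly connected synchronizing automaton $\mathcal{T}(I) = \mathcal{A}(\mathcal{F})$ satisfying $\Syn(\mathcal{T}(I)) = I$ by the categorical equivalence. The transitions are exactly those prescribed by the tail action, since Lemma \ref{lem: action of left ideals} guarantees $I(b,s,\omega(u))\, a \subseteq I((b,s,\omega(u))\cdot a)$.

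Next I would bound $|Q|$. A state of $\mathcal{T}(I)$ is a triple $(b,s,\omega(u)) \in \Sigma \times \mathbb{Z}_{m}[\Sigma] \times \Omega(I)$. There are $k$ choices for $b$, $m^{k}$ choices for $s \in \mathbb{Z}_{m}[\Sigma]$, and since $\Omega(I) \subseteq 2^{A}$ with $|A| = k m^{k} n$ we have $|\Omega(I)| \le 2^{km^{k}n}$. Multiplying gives the stated bound $(km^{k})\, 2^{km^{k}n}$.

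For the algorithmic part, I would describe a standard reachability BFS starting from the triple associated with a fixed shortest word $w \in M$ (which corresponds to $(\varepsilon, 0, \{(\varepsilon, 0, q_{0})\})$ after its first application of the tail action). For each triple popped from the queue, and each $a \in \Sigma$, compute $\omega(u) \circ a$ using the explicit formula in Lemma \ref{lem: equality of omega}, test whether it contains a triple with last coordinate $f$ (which, by Lemma \ref{lem: unique final state}, detects $ua \in \Sigma^{*}M$), and then branch between the two cases of the tail action to produce the successor triple. Using a hash table keyed by triples, each of the at most $k \cdot (km^{k}) 2^{km^{k}n}$ state–letter pairs is processed in amortized constant time (the size $km^{k}n$ of the bit-vector representing $\omega$ and the cost of manipulating $\mathrsfs{B}$-transitions contribute lower-order factors that are typically absorbed), yielding total running time $\mathcal{O}((k^{2}m^{k})2^{km^{k}n})$.

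The main obstacle, such as it is, is the algorithmic side rather than the structural one: one must verify that implementing the tail action step by step only requires the minimal DFA $\mathrsfs{B}$ (never the ideal $I$ itself), and that the case split $ua \in \Sigma^{*}M$ versus $ua \notin \Sigma^{*}M$ can be decided from $\omega(u) \circ a$ alone via Lemma \ref{lem: unique final state}. Once this is in place, correctness of the output is immediate from Theorem \ref{theo: new reset decomposition}, and only reachable triples are enumerated, so the stated bounds control both space and time.
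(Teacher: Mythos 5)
Your proposal is correct and follows essentially the same route as the paper: Theorem \ref{theo: new reset decomposition} gives the reset left regular decomposition, the categorical equivalence (equivalently \cite[Theorem 2.2]{ReRo16}) converts it into a strongly connected synchronizing automaton with $\Syn = I$, the cardinality of $T(I) = \Sigma \times \mathbb{Z}_m[\Sigma] \times \Omega(I)$ yields the state bound, and a reachability BFS driven by the explicit tail-action formula gives the algorithm and its running time. The only minor divergence is cosmetic: the paper seeds the BFS directly at a known non-empty state $(a, \tr(aw), \omega(w))$ for some $aw \in M$, whereas you bootstrap from the empty-word triple by applying the tail action along a shortest $w \in M$; both identify the same reachable component.
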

\begin{proof}
The existence of the tail semiautomaton $\mathcal{T}(I)$ follows from Theorem \ref{theo: new reset decomposition} and \cite[Theorem 2.2]{ReRo16}. The bound follows from the cardinality of the set $T(I)=\Sigma\times \mathbb{Z}_{m}[\Sigma]\times\Omega(I)$ on which the tail action acts. The effectiveness of the construction of $\mathcal{T}(I)$ follows again from Theorem \ref{theo: new reset decomposition}, \cite[Theorem 2.2]{ReRo16} and Lemma \ref{lem: action of left ideals}. Note that $I(b,x,\omega(u))$ may be empty for some $(b,x,\omega(u))\in T(I)$, therefore to construct the automaton $\mathcal{T}(I)$, Algorithm \ref{alg:nlognAlg} finds all the states that are connected to $(a, \tr(aw), \omega(w))$, for some $aw\in M$ for which $I(a, \tr(aw), \omega(w))\neq \emptyset$. This is done in the first line of Algorithm \ref{alg:nlognAlg} where it is initialized $Q$ to the set $\{(a, \tr(aw), \omega(w))\}$. The rest of the algorithm is just a standard breadth first search in which each transition is recorded into the variable $\delta$.
\end{proof}
 \begin{algorithm}[t]
            \caption{The tail structure semiautomaton $\mathcal{T}(I)=\la Q,\Sigma, \delta\ra$}\label{alg:nlognAlg}
            \begin{algorithmic}
            \State{$Q\gets \{(a, \tr(aw), \omega(w))\}$, for some $aw\in M$, $a\in\Sigma$}
            % \Comment{Choose a state $(x,y,\omega(u))$ for which $I(x,y,\omega(u))\neq\emptyset$}
             \State{$A\gets Q$}
              \State{$B\gets \emptyset$}
              \While{$A\neq\emptyset$}
              \ForAll{$q\in A$}
               \ForAll{$a \in \Sigma$}
                  \If{$ q\cdot a\notin Q$}       \Comment{We have found a new state.}
                  \State{$Q\gets Q\cup\{q\}$}
                  \State{$B\gets B\cup\{q\}$} \Comment{Add to the states to check the new transitions.}
                \EndIf 
                \State $\delta\gets \delta \cup\{(q,a,q\cdot a)\}$\Comment{Note that the tail action $q\cdot a$ is computable}
                \EndFor
                \EndFor
                \State{$A\gets B$, $B\gets \emptyset$}
                \EndWhile
            \end{algorithmic}
          \end{algorithm}%
This last corollary partially answers to Problem 3 in \cite{ReRo16} since it presents a new construction of a reset left regular decomposition depending on the state complexity $n$ of the language $M$, and the obtained bound does not depend on a double exponential like in the bound 
\begin{equation}\label{eq: old bound}
m^{k2^{\ell}}\left(\sum_{t=2}^{\ell}m^{{\ell \choose t}}\right)^{2^{\ell}}
\end{equation}
presented in \cite[Corollary 3.5]{ReRo16}, where $m=\|I\|+1$, $k=|\Sigma|$ and $\ell$ is the state complexity of the reverse ideal $I^{R}=\Sigma^{*}M^{R}\Sigma^{*}$. Moreover, with the approach presented here we are able to explicitly construct a strongly connected automaton $\mathcal{T}(I)$ having $I$ as the set of reset words, while with the approach presented in \cite{ReRo13, ReRo16} the only way to explicitly build such an automaton would be to exhaustively check among all the automata with a number of states less than the bound stated in (\ref{eq: old bound}). 
\\
Comparing the bound in Corollary \ref{cor: another reset regular decomposition} and the bound (\ref{eq: old bound}) seems a non-easy task, mainly because the parameter $\ell$ is the state complexity of the reverse ideal $I^{R}$ while the parameter $n$ in Corollary \ref{cor: another reset regular decomposition} is the state complexity of $M$. By a result of Brzozowski et al. \cite[Theorem 6]{BrzJiLi} the state complexity $\sta(M)$ of $M$ differs from the state complexity $\sta(I)$ of the generated ideal $I=\Sigma^{*}M\Sigma^{*}$ by a polynomial function, namely $\sta(M)\le 3+(\sta(I)-1)(\sta(I)-2)/2$, and the bound is actually tight. When passing to the reversal, however, by \cite[Theorem 3]{BrzJiLi} we have $\sta(I)\le 2^{\sta(I^{R})-2}+1$ and this is also tight. From these two last bounds we get $\sta(M)\le 3+2^{\sta(I^{R})-2}(2^{\sta(I^{R})-2}-1)/2$, but we do not know whether or not this is tight (in \cite{BrzJiLi} the witnesses of the tightness of the two operations are different).
%\\
%We remark that is probably possibile to prove the above theorem without the usage of the free module $\mathbb{Z}_{m}[\Sigma]$ and using just the condition $\omega(\tau(uv))=\omega(\tau(z))$ expressed in the ``reset condition'' (\ref{eq: condition reset}). This direction of research should be further explored. Indeed, this fact would further reduce the previous bound to $k2^{k n}$. 

\end{document}